\newcommand{\cmp}{Comm. Math. Phys.~}
\newcommand{\jpa}{J. Phys. A~}
\newcommand{\natphy}{Nature Phys.~}
\newcommand{\njp}{New. J. Phys.~}
\newcommand{\prl}{Phys. Rev. Lett.~}
\newcommand{\pra}{Phys. Rev. A~}
\newcommand{\pla}{Phys. Lett. A~}
\newcommand{\sci}{Science}
\definecolor{myurlcolor}{rgb}{0,0,0.7}
\newcommand{\blue}{\textcolor{blue}}
\newcommand{\proj}[1]{| #1\rangle\!\langle #1 |}
\newcommand{\tinyspace}{\mspace{1mu}}
\newcommand{\op}[1]{\operatorname{#1}}
\newcommand{\abs}[1]{\left\lvert\tinyspace #1 \tinyspace\right\rvert}
\newcommand{\norm}[1]{\left\lVert\tinyspace #1 \tinyspace\right\rVert}
\renewcommand{\det}{\operatorname{det}}
\renewcommand{\t}{{\scriptscriptstyle\mathsf{T}}}
\newcommand{\setft}[1]{\mathrm{#1}}
\newcommand{\density}[1]{\setft{D}\left(#1\right)}
\newcommand{\rank}{\op{rank}}
\newcommand{\supp}{{\operatorname{supp}}} \newcommand{\var}{\operatorname{Var}}
\def \dif {\mathrm{d}}
\def \diag {\mathrm{diag}}
\def\complex{\mathbb{C}}
\def\real{\mathbb{R}}
\def\I{\mathbb{1}}
\newcommand{\Inner}[2]{\left\langle #1 , #2\right\rangle}
\newcommand{\Innerm}[3]{\left\langle #1 \left| #2 \right| #3 \right\rangle}
\newcommand{\Pa}[1]{\left(#1\right)}
\newcommand{\Br}[1]{\left[#1\right]}
\newcommand{\set}[1]{\{#1\}}
\newcommand{\Set}[1]{\left\{#1\right\}}
\newcommand{\bra}[1]{\langle#1|}
\newcommand{\ket}[1]{|#1\rangle}
\DeclareMathOperator{\trace}{Tr}
\newcommand{\Ptr}[2]{\trace_{#1}\Pa{#2}}
\newcommand{\Tr}[1]{\Ptr{}{#1}}
\def\cU{\mathcal{U}}
\def\bsA{\boldsymbol{A}}\def\bsB{\boldsymbol{B}}\def\bsC{\boldsymbol{C}}\def\bsD{\boldsymbol{D}}
\def\bsO{\boldsymbol{O}}
\def\bsQ{\boldsymbol{Q}}\def\bsT{\boldsymbol{T}}
\def\bsU{\boldsymbol{U}}\def\bsX{\boldsymbol{X}}
\def\bsa{\boldsymbol{a}}\def\bsb{\boldsymbol{b}}\def\bsc{\boldsymbol{c}}
\def\bsp{\boldsymbol{p}}
\def\bsu{\boldsymbol{u}}\def\bsx{\boldsymbol{x}}\def\bsy{\boldsymbol{y}}
\def\bsz{\boldsymbol{z}}
\def\rD{\mathrm{D}}
\def\sL{\mathscr{L}}\def\sM{\mathscr{M}}
\def\sP{\mathscr{P}}
\def\O{\textsf{O}}
\def\U{\textsf{U}}
\def\mC{\mathbb{C}}
\def\mR{\mathbb{R}}
\newtheorem{thrm}{Theorem}
\newtheorem{prop}{Proposition}
\newtheorem{cor}{Corollary}
\theoremstyle{definition}
\newtheorem{definition}{Definition}
\newtheorem{remark}{Remark}
\newtheorem{exam}{Example}
\begin{document}

\title{\bf\large Probability density functions of quantum mechanical observable uncertainties}

\author{\blue{Lin Zhang$^1$}\footnote{E-mail: godyalin@163.com},\ \blue{Jinping Huang$^1$},\ \blue{Jiamei Wang$^2$}\footnote{E-mail: wangjm@ahut.edu.cn},\ \blue{Shao-Ming Fei$^3$}\\
  {\it\small $^1$Institute of Mathematics, Hangzhou Dianzi University, Hangzhou 310018, China}\\
  {\it\small $^2$Department of Mathematics, Anhui University of Technology, Ma Anshan 243032,
China}\\
{\it\small $^3$School of Mathematical Sciences, Capital Normal
University, Beijing 100048, China}}
\date{}
\maketitle

\begin{abstract}
We study the uncertainties of quantum mechanical observables, quantified by the standard
deviation (square root of variance)  in Haar-distributed random pure states.
We derive analytically the probability density functions (PDFs) of the uncertainties of arbitrary qubit
observables. Based on these PDFs, the uncertainty
regions of the observables are characterized by the supports of the PDFs.
The state-independent uncertainty relations are then transformed into the
optimization problems over uncertainty regions, which opens a new
vista for studying state independent uncertainty relations.
Our results may be generalized to multiple observable case in higher dimensional spaces.\\~\\
\noindent{\bf Keywords:} Uncertainty of observable; Probability
density function; Uncertainty region; State-independent uncertainty
relation
\end{abstract}

\newpage
\section{Introduction}

The uncertainty principle rules out the possibility to obtain
precise measurement outcomes simultaneously when one measures two
incomparable observables at the same time. Since the uncertainty
relation satisfied by the position and momentum
\cite{Heisenberg1927}, various uncertainty relations have been
extensively investigated
\cite{Dammeier2015njp,Li2015,Guise2018pra,Giorda2019pra,Xiao2019pra,Sponar2020pra}.
On the occasion of celebrating the 125th anniversary of the academic
journal "Science", the magazine listed 125 challenging scientific
problems \cite{Seife2005}. The 21st problem asks: Do deeper
principles underlie quantum uncertainty and nonlocality? As
uncertainty relations play significant roles in entanglement
detection
\cite{Hofman2003pra,Guhne2004prl,Guhne2009pra,Schwonnek2017prl,Qian2018qip,Zhao2019prl}
and quantum nonlocality \cite{Oppenheim2010}, and many others, it is
desirable to explore the mathematical structures and physical
implications of uncertainties in more details from various
perspectives.

The state-dependent Robertson-Schr\"{o}dinger uncertainty relation
\cite{Kennard1927,Weyl1928,Robertson1929,Schrodinger1930} is of form:
\begin{eqnarray*}
(\Delta_\rho\bsA)^2(\Delta_\rho\bsB)^2\geqslant\frac14\Br{
(\langle\set{\bsA,\bsB}\rangle_\rho -
\langle\bsA\rangle_\rho\langle\bsB\rangle_\rho)^2+\langle[\bsA,\bsB]\rangle_\rho^2},
\end{eqnarray*}
where $\set{\bsA,\bsB}:=\bsA\bsB+\bsB\bsA$,
$[\bsA,\bsB]:=\bsA\bsB-\bsB\bsA$, and
$(\Delta_\rho\bsX)^2:=\Tr{\bsX^2\rho}-\Tr{\bsX\rho}^2$ is the
variance of $\bsX$ with respect to the state $\rho$, $\bsX=\bsA,\bsB$.

Recently, state-independent uncertainty relations have been investigated \cite{Guhne2004prl,Schwonnek2017prl},
which have direct applications to entanglement detection.
In order to get state-independent uncertainty relations, one considers the sum
of the variances and solves the following optimization problems:
\begin{eqnarray}
\var_\rho(\bsA)+\var_\rho(\bsB)&\geqslant&
\min_{\rho\in\rD(\mC^d)}\Pa{\var_\rho(\bsA)+\var_\rho(\bsB)},\label{eq:1}\\
\Delta_\rho\bsA+\Delta_\rho\bsB&\geqslant&
\min_{\rho\in\rD(\mC^d)}\Pa{\Delta_\rho\bsA+\Delta_\rho\bsB},\label{eq:2}
\end{eqnarray}
where $\var_\rho(\bsX)=(\Delta_\rho\bsX)^2$ is the variance of the
observable $\bsX$ associated to state $\rho\in\rD(\mC^d)$.

Efforts have been devoted to
provide quantitative uncertainty bounds for the above inequalities \cite{Busch2019}.
However, searching for such uncertainty bounds may be not the best way to
get new uncertainty relations \cite{Zhang2018}. Recently, Busch
and Reardon-Smitha proposed to consider the \emph{uncertainty
region} \cite{Busch2019} of two observables $\bsA$ and $\bsB$,
instead of finding the bounds based on some particular choice of uncertainty
functional, typically such as the product or sum of uncertainties
\cite{Maccone2014}. Once we can identify what the structures of
uncertainty regions are, we can infer specific information about the
state with the minimal uncertainty in some sense.
In view of this, the above two optimization problems
\eqref{eq:1} and \eqref{eq:2} become
\begin{eqnarray*}
\min_{\rho\in\rD(\mC^d)}\Pa{\var_\rho(\bsA)+\var_\rho(\bsB)} &=&
\min\Set{x^2+y^2:(x,y)\in\cU^{(\text{m})}_{\Delta\bsA,\Delta\bsB}},\\
\min_{\rho\in\rD(\mC^d)}\Pa{\Delta_\rho\bsA+\Delta_\rho\bsB} &=&
\min\Set{x+y:(x,y)\in\cU^{(\text{m})}_{\Delta\bsA,\Delta\bsB}},
\end{eqnarray*}
where $\cU^{(\text{m})}_{\Delta\bsA,\Delta\bsB}$ is the so-called
uncertainty region of two observables $\bsA$ and $\bsB$ defined by
\begin{eqnarray*}
\cU^{(\text{m})}_{\Delta\bsA,\Delta\bsB} =
\Set{(\Delta_\rho\bsA,\Delta_\rho\bsB)\in\mR^2_+:
\rho\in\rD(\mC^d)}.
\end{eqnarray*}

Random matrix theory or probability theory are powerful tools in
quantum information theory. Recently, the non-additivity of quantum
channel capacity \cite{Hastings2009} has been cracked via probabilistic
tools. The Duistermaat-Heckman measure on moment polytope has been used to
derive the probability distribution density of one-body quantum
marginal states of multipartite random quantum states
\cite{Christandl2014,Dartois2020} and that of classical probability
mixture of random quantum states \cite{Zhang2018pla,Zhang2019jpa}.
As a function of random quantum pure states, the probability density
function (PDF) of quantum expectation value of an observable is also
analytical calculated \cite{Venuti2013}. Motivated by these works,
we investigate the joint probability density functions
of uncertainties of observables. By doing so, we find that it is
not necessarily to solve directly the uncertainty regions of
observables. It is sufficient to identify the support of such PDF
because PDF vanishes exactly beyond the uncertainty regions. Thus
all the problems are reduced to compute the PDF of uncertainties of
observables, since all information concerning uncertainty regions and
state-independent uncertainty relations are encoded in such PDFs.
In \cite{Zhang2021preprint} we have studied such PDFs for the random mixed quantum
state ensembles, where all problems concerning qubit observables are completely solved, i.e., analytical formulae of the PDFs of uncertainties are obtained, and the characterization of
uncertainty regions over which the optimization problems for
state-independent lower bound of sum of variances is presented. In this
paper, we will focus the same problem for random pure quantum
state ensembles.

Let $\delta(x)$ be delta function \cite{Hoskins2009} defined by
\begin{eqnarray*}
\delta(x)=\begin{cases} +\infty,&\text{if }x\neq0;\\0,&\text{if
}x=0.\end{cases}
\end{eqnarray*}
One has $\Inner{\delta}{f}:=\int_\real f(x)\delta(x)\dif
x=f(0)$. Denote by $\delta_a(x):=\delta(x-a)$. Then
$\Inner{\delta_a}{f} = f(a)$.
Let $Z(g):=\Set{x\in D(g):g(x)=0}$ be the zero set of function $g(x)$ with its domain $D(g)$. We will use the following definition.

\begin{definition}[\cite{lz2020ijtp,Zuber2020}] If $g:\real\to\real$ is a smooth function
(the first derivative $g'$ is a continuous function) such that
$Z(g)\cap Z(g')=\emptyset$, then the composite $\delta\circ g$ is
defined by:
\begin{eqnarray*}
\delta(g(x)) = \sum_{x\in Z(g)} \frac1{\abs{g'(x)}}\delta_x.
\end{eqnarray*}
\end{definition}

\section{Uncertainty regions of observables}

We can extend the notion of the uncertainty region of two
observables $\bsA$ and $\bsB$, put forward in \cite{Busch2019}, into
that of multiple observables.

\begin{definition}
Let $(\bsA_1,\ldots,\bsA_n)$ be an $n$-tuple of qudit observables
acting on $\mC^d$. The \emph{uncertainty region} of such $n$-tuple
$(\bsA_1,\ldots,\bsA_n)$, for the mixed quantum state ensemble, is
defined by
\begin{eqnarray*}
\cU^{(\text{m})}_{\Delta\bsA_1,\ldots,\Delta\bsA_n}:=\Set{(\Delta_\rho\bsA_1,\ldots,\Delta_\rho\bsA_n)\in\mR^n_+:\rho\in\rD(\mC^d)}.
\end{eqnarray*}
Similarly, the \emph{uncertainty region} of such $n$-tuple
$(\bsA_1,\ldots,\bsA_n)$, for the pure quantum state ensemble, is
defined by
\begin{eqnarray*}
\cU^{(\text{p})}_{\Delta\bsA_1,\ldots,\Delta\bsA_n}:=\Set{(\Delta_\psi\bsA_1,\ldots,\Delta_\psi\bsA_n)\in\mR^n_+:\ket{\psi}\in\mC^d}.
\end{eqnarray*}
Apparently,
$\cU^{(\text{p})}_{\Delta\bsA_1,\ldots,\Delta\bsA_n}\subset\cU^{(\text{m})}_{\Delta\bsA_1,\ldots,\Delta\bsA_n}$.
\end{definition}
Note that our definition about uncertainty region is different from
the one given in \cite{Dammeier2015njp}. In the above definition, we
use the standard deviation instead of variance.

Next we will show that
$\cU^{(\text{m})}_{\Delta\bsA_1,\ldots,\Delta\bsA_n}$ is contained
in the supercube in $\mR^n_+$. To this end, we study the following
sets $\sP(\bsA)=\Set{\text{Var}_\psi(\bsA): \ket{\psi}\in\mC^d}$ and
$\sM(\bsA)=\Set{\text{Var}_\rho(\bsA): \rho\in\density{\mC^d}}$ for
a qudit observable $\bsA$ acting on $\mC^d$. The relationship
between both sets $\sP(\bsA)$ and $\sM(\bsA)$ is summarized into the
following proposition.
\begin{prop}\label{prop:convx}
It holds that
\begin{eqnarray*}
\sP(\bsA)=\sM(\bsA)=\mathrm{conv}(\sP(\bsA))
\end{eqnarray*}
is a closed interval $[0,\max_\psi\mathrm{Var}_\psi(\bsA)]$.
\end{prop}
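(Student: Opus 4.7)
The plan is to reduce both $\sP(\bsA)$ and $\sM(\bsA)$ to a single set of classical variances on the probability simplex, and then deduce the interval structure by a continuity argument.

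First, I would diagonalize the observable as $\bsA=\sum_{i=1}^{d}\lambda_{i}\ket{i}\bra{i}$ in an orthonormal eigenbasis of $\mC^{d}$. For an arbitrary state $\rho\in\rD(\mC^{d})$, set $p_{i}:=\bra{i}\rho\ket{i}$; then $p=(p_{1},\ldots,p_{d})$ is a probability vector in the standard simplex $\Delta_{d}$. Because $\Tr{\bsA\rho}=\sum_{i}\lambda_{i}p_{i}$ and $\Tr{\bsA^{2}\rho}=\sum_{i}\lambda_{i}^{2}p_{i}$ depend only on these diagonal entries, $\var_{\rho}(\bsA)$ equals the classical variance $\var_{p}(\lambda):=\sum_{i}p_{i}\lambda_{i}^{2}-(\sum_{i}p_{i}\lambda_{i})^{2}$ of the random variable $\lambda$ with distribution $p$. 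The same formula applies to a pure state $\ket{\psi}=\sum_{i}\alpha_{i}\ket{i}$ with $p_{i}=|\alpha_{i}|^{2}$.

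Second, I would observe that every $p\in\Delta_{d}$ is realised both as the diagonal of a mixed state, for instance $\rho=\diag(p)$, and as the diagonal of the pure state $\sum_{i}\sqrt{p_{i}}\,\ket{i}$. Hence $\sP(\bsA)=\sM(\bsA)=\Set{\var_{p}(\lambda):p\in\Delta_{d}}$, which simultaneously settles the first two equalities.

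Finally, the map $p\mapsto\var_{p}(\lambda)$ is continuous on the compact, connected simplex $\Delta_{d}$, so its image is a compact connected subset of $\mR$, i.e.\ a closed interval $[a,b]$. The lower endpoint is $a=0$, attained at each vertex of $\Delta_{d}$ (equivalently at any eigenvector of $\bsA$), and by definition the upper endpoint is $b=\max_{\psi}\var_{\psi}(\bsA)$. A closed interval is convex, so $\mathrm{conv}(\sP(\bsA))=\sP(\bsA)$, which completes the proposition. The whole argument is essentially mechanical; the single conceptual observation is that $\var_{\rho}(\bsA)$ depends only on the diagonal of $\rho$ in an eigenbasis of $\bsA$, after which one is firmly in classical probability on the simplex, so I do not anticipate a serious obstacle.
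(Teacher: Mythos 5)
Your proof is correct, but it takes a genuinely different route from the paper's. The paper sandwiches $\sP(\bsA)\subset\sM(\bsA)\subset\mathrm{conv}(\sP(\bsA))$, where the second inclusion rests on the Petz--T\'{o}th result that every $\rho$ admits a pure-state ensemble decomposition $\rho=\sum_j p_j\proj{\psi_j}$ with $\var_\rho(\bsA)=\sum_j p_j\var_{\psi_j}(\bsA)$; it then proves $\sP(\bsA)$ is a closed interval (hence convex) by viewing it as the image of the compact connected group $\U(d)$ under the continuous map $\bsU\mapsto\var_{\bsU\psi_0}(\bsA)$. You instead observe that $\var_\rho(\bsA)$ depends only on the diagonal $p_i=\bra{i}\rho\ket{i}$ of $\rho$ in an eigenbasis of $\bsA$, and that the attainable diagonals are exactly the full probability simplex both for mixed states (via $\diag(p)$) and for pure states (via $\sum_i\sqrt{p_i}\ket{i}$), so $\sP(\bsA)=\sM(\bsA)$ falls out immediately as the image of the compact connected simplex under a continuous map. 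Your argument is more elementary and self-contained: it dispenses with the external ensemble-decomposition lemma entirely and proves the equality $\sP(\bsA)=\sM(\bsA)$ directly rather than by a convexity sandwich. The one thing the paper's route buys is that Eq.~\eqref{eq:vardecom} is invoked again in the concluding discussion (on whether $\cU^{(\mathrm{p})}=\cU^{(\mathrm{m})}$ for general $d$), so the citation does independent work there; for the proposition itself your reduction to classical variance on the simplex is cleaner and equally rigorous.
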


\begin{proof}
Note that
$\sP(\bsA)\subset\sM(\bsA)\subset\mathrm{conv}(\sP(\bsA))$. Here the
first inclusion is apparently; the second inclusion follows
immediately from the result obtained in \cite{Petz2012}: For any
density matrix $\rho\in\density{\mC^d}$ and a qudit observable
$\bsA$, there is a pure state ensemble decomposition $\rho=\sum_j
p_j\proj{\psi_j}$ such that
\begin{eqnarray}\label{eq:vardecom}
\text{Var}_\rho(\bsA) = \sum_j p_j\text{Var}_{\psi_j}(\bsA).
\end{eqnarray}
Since all pure states on $\mC^d$ can be generated via a fixed
$\psi_0$ and the whole unitary group $\U(d)$, it follows that
\begin{eqnarray*}
\sP(\bsA) = \mathrm{im}(\Phi),
\end{eqnarray*}
where the mapping $\Phi:\U(d)\to \sP(\bsA)$ is defined by
$\Phi(\bsU)=\mathrm{Var}_{\bsU\psi_0}(\bsA)$. This mapping $\Phi$ is
surjective and continuous. Due to the fact that $\U(d)$ is a compact
Lie group, we see that $\Phi$ can attain maximal and minimal values
over the unitary group $\U(d)$. In fact, $\min_{\U(d)}\Phi=0$. This
can be seen if we take some $\bsU$ such that $\bsU\ket{\psi_0}$ is
an eigenvector of $\bsA$. Since $\U(d)$ is also connected, then
$\mathrm{im}(\Phi)=\Phi(\U(d))$ is also connected, thus
$\mathrm{im}(\Phi)=[0,\max_{\U(d)}\Phi]$. This amounts to say that
$\sP(\bsA)$ is a closed interval $[0,\max_{\U(d)}\Phi]$ which means
that $\sP(\bsA)$ is a compact and convex set, i.e.,
\begin{eqnarray*}
\sP(\bsA) =\mathrm{conv}(\sP(\bsA)).
\end{eqnarray*}
Therefore
\begin{eqnarray*}
\sP(\bsA)
=\sM(\bsA)=\mathrm{conv}(\sP(\bsA))=[0,\max_{\U(d)}\Phi]=[0,\max_{\psi}\mathrm{Var}_\psi(\bsA)].
\end{eqnarray*}
This completes the proof.
\end{proof}

Next, we determine
$\max_{\rho\in\density{\complex^d}}\var_\rho(\bsA)$ for an
observable $\bsA$. To this end, we recall the following
$(d-1)$-dimensional probability simplex, which is defined by
\begin{eqnarray*}
\Delta_{d-1}:=\Set{\bsp=(p_1,\ldots,p_d)\in\real^d:p_k\geqslant0(\forall
k\in[d]),\sum_jp_j=1}.
\end{eqnarray*}
Its interior of $\Delta_{d-1}$ is denoted by $\Delta^\circ_{d-1}$:
\begin{eqnarray*}
\Delta^\circ_{d-1}:=\Set{\bsp=(p_1,\ldots,p_d)\in\real^d:p_k>0(\forall
k\in[d]),\sum_jp_j=1}.
\end{eqnarray*}
This indicates that a point $\bsx$ in the boundary
$\partial\Delta_{d-1}$ means that there must be at least a component
$x_i=0$ for some $i\in[d]$. Now we separate the boundary of
$\partial\Delta_{d-1}$ into the union of the following subsets:
\begin{eqnarray*}
\partial\Delta_{d-1} =\bigcup^{d}_{j=1} F_j,
\end{eqnarray*}
where $F_j:=\Set{\bsx\in\partial\Delta_{d-1}: x_j=0}$. Although the
following result is known in 1935 \cite{Bhatia2000}, we still
include our proof for completeness.
\begin{prop}\label{prop:varmax}
Assume that $\bsA$ is an observable acting on $\complex^d$. Denote
the vector consisting of eigenvalues of $\bsA$ by $\lambda(\bsA)$
with components being $\lambda_1(\bsA)\leqslant\cdots\leqslant
\lambda_d(\bsA)$. It holds that
\begin{eqnarray*}
\max\Set{\var_\rho(\bsA):\rho\in\density{\complex^d}}
=\frac14\Pa{\lambda_{\max}(\bsA)-\lambda_{\min}(\bsA)}^2.
\end{eqnarray*}
Here $\lambda_{\min}(\bsA)=\lambda_1(\bsA)$ and
$\lambda_{\max}(\bsA)=\lambda_d(\bsA)$.
\end{prop}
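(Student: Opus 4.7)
The plan is to reduce the optimisation to a one-dimensional Popoviciu-type variance bound. By Proposition~\ref{prop:convx} we already know $\sM(\bsA)=\sP(\bsA)$, so it suffices to take the maximum over pure states. Diagonalise $\bsA=\sum_{i=1}^d \lambda_i(\bsA)\proj{i}$ and expand a generic pure state as $\ket{\psi}=\sum_i c_i\ket{i}$; with $p_i:=|c_i|^2$,
\begin{eqnarray*}
\var_\psi(\bsA)=\sum_i p_i\lambda_i(\bsA)^2-\Pa{\sum_i p_i\lambda_i(\bsA)}^2
\end{eqnarray*}
is precisely the variance of a real random variable $X$ taking the value $\lambda_i(\bsA)$ with probability $p_i$, and every $\bsp=(p_1,\ldots,p_d)\in\Delta_{d-1}$ is realised by a suitable $\ket{\psi}$. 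The problem therefore becomes the maximisation of $\var(X)$ over probability vectors supported on $\{\lambda_1(\bsA),\ldots,\lambda_d(\bsA)\}$.

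Next I would establish the upper bound by the classical trick $(X-\lambda_{\min}(\bsA))(\lambda_{\max}(\bsA)-X)\geqslant 0$. Taking expectations gives $\expect[X^2]\leqslant (\lambda_{\min}(\bsA)+\lambda_{\max}(\bsA))\expect[X]-\lambda_{\min}(\bsA)\lambda_{\max}(\bsA)$, and completing the square in $\expect[X]$ yields
\begin{eqnarray*}
\var(X)\leqslant \frac14\Pa{\lambda_{\max}(\bsA)-\lambda_{\min}(\bsA)}^2-\Pa{\expect[X]-\frac{\lambda_{\min}(\bsA)+\lambda_{\max}(\bsA)}{2}}^2,
\end{eqnarray*}
which is dominated by $\tfrac14(\lambda_{\max}(\bsA)-\lambda_{\min}(\bsA))^2$. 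Equality forces both $(X-\lambda_{\min}(\bsA))(\lambda_{\max}(\bsA)-X)=0$ almost surely and $\expect[X]=(\lambda_{\min}(\bsA)+\lambda_{\max}(\bsA))/2$, so necessarily $p_1=p_d=\tfrac12$ and all other $p_i=0$. Taking $\ket{\psi}=\tfrac{1}{\sqrt 2}(\ket{1}+\ket{d})$ realises this and attains the bound by direct computation, finishing the proposition.

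The argument is thus essentially a reduction followed by Popoviciu's inequality, and the only conceptual step is recognising that Proposition~\ref{prop:convx} collapses the mixed-state and pure-state maxima. If one preferred to exploit the combinatorial hint suggested by the author's decomposition $\partial\Delta_{d-1}=\bigcup_j F_j$, a Lagrange-multiplier computation on the open simplex gives $\lambda_i(\bsA)^2-2\lambda_i(\bsA)\expect[X]=\mathrm{const}$ for every $i$ in the support of $\bsp$; since this quadratic in $\lambda_i(\bsA)$ has at most two roots, any critical point has $\bsp$ supported on at most two distinct eigenvalues, and induction on $d$ (each face $F_j$ being itself a $(d-2)$-simplex) completes the proof. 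No genuine obstacle is expected; the only mild subtlety is the equality analysis, which is handled cleanly by Popoviciu's characterisation.
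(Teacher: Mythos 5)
Your proof is correct, but it takes a genuinely different route from the paper's. Both arguments begin by reducing $\var_\rho(\bsA)$ to the classical variance $\sum_i p_i\lambda_i(\bsA)^2-\pa{\sum_i p_i\lambda_i(\bsA)}^2$ of a random variable supported on the spectrum, with $\bsp$ ranging over the simplex $\Delta_{d-1}$ (you get there by diagonalising and invoking Proposition~\ref{prop:convx}; the paper gets there via the doubly stochastic matrix $\bsD_{\bsU}=\overline{\bsU}\circ\bsU$). From that point the paper runs a Lagrange-multiplier analysis to rule out interior critical points --- the stationarity condition forces every eigenvalue in the support to be a root of a single quadratic, impossible for $d\geqslant3$ distinct eigenvalues --- and then inducts on the faces $F_j$ of $\partial\Delta_{d-1}$, comparing the resulting candidate maxima. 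You instead apply Popoviciu's inequality directly: $(X-\lambda_{\min}(\bsA))(\lambda_{\max}(\bsA)-X)\geqslant0$ in expectation plus completing the square gives the upper bound $\tfrac14(\lambda_{\max}(\bsA)-\lambda_{\min}(\bsA))^2$ in two lines, uniformly in $d$ and with a clean equality characterisation ($p_1=p_d=\tfrac12$, all other $p_i=0$), realised by $\ket{\psi}=\tfrac1{\sqrt2}(\ket{1}+\ket{d})$ exactly as in the paper's Remark~1. Your approach buys brevity and avoids the induction entirely; the paper's approach is more laborious but makes explicit where on the simplex the extremum lives at every stage, and its critical-point computation is essentially the Lagrange alternative you sketch at the end. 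Your secondary observation --- that any critical point is supported on at most two eigenvalues because the stationarity condition is a quadratic in $\lambda_i(\bsA)$ --- is in fact a slightly cleaner packaging of the paper's own non-existence argument. No gaps.
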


\begin{proof}
Assume that $\bsa:=\lambda(\bsA)$ where $a_j:=\lambda_j(\bsA)$. Note
that
$\var_\rho(\bsA)=\Tr{\bsA^2\rho}-\Tr{\bsA\rho}^2=\Inner{\bsa^2}{\bsD_{\bsU}\lambda(\rho)}-\Inner{\bsa}{\bsD_{\bsU}\lambda(\rho)}^2$,
where $\bsa=(a_1,\ldots,a_d)^\t, \bsa^2=(a^2_1,\ldots,a^2_d)^\t$,
and $\bsD_{\bsU}=\overline{\bsU}\circ\bsU$ (here $\circ$ stands for
Schur product, i.e., entrywise product), and
$\lambda(\rho)=(\lambda_1(\rho),\ldots,\lambda_d(\rho))^\t$. Denote
\begin{eqnarray*}
\bsx:=\bsD_{\bsU}\lambda(\rho)\in\Delta_{d-1}:=\Set{\bsp=(p_1,\ldots,p_d)\in\real^d_+:\sum_jp_j=1},
\end{eqnarray*}
the $(d-1)$-dimensional probability simplex. Then
\begin{eqnarray*}
\var_\rho(\bsA)=\Inner{\bsa^2}{\bsx}-\Inner{\bsa}{\bsx}^2=\sum^d_{j=1}a^2_jx_j
- \Pa{\sum^d_{j=1}a_jx_j}^2=:f(\bsx).
\end{eqnarray*}

(i) If $d=2$,
\begin{eqnarray*}
f(x_1,x_2)&=&a_1^2x_1+a_2^2x_2-(a_1x_1+a_2x_2)^2\notag\\
&=&a_1^2x_1+a_2^2(1-x_1)-(a_1x_1+a_2(1-x_2))^2\notag\\
&=&(a_2-a_1)^2\Br{\frac14-\Pa{x_1-\frac12}^2}\leqslant
\frac14(a_2-a_1)^2,
\end{eqnarray*}
implying that $f_{\max}=\frac14(a_2-a_1)^2$ when $x_1=x_2=\frac12$.

(ii) If $d\geqslant3$, without loss of generality, we assume that
$a_1< a_2<\cdots<a_d$, we will show that the function $f$ takes its
maximal value on the point
$(x_1,x_2,\ldots,x_{d-1},x_d)=(\tfrac12,0,\ldots,0,\tfrac12)$, with
the maximal value being
$\frac14(a_d-a_1)^2=\frac14\Pa{\lambda_{\max}(\bsA)-\lambda_{\min}(\bsA)}^2$.
Then, using Lagrangian multiplier method, we let
\begin{eqnarray*}
L(x_1,x_2,\cdots,x_d,\lambda)=\sum_{i=1}^d a_i^2x_i-\Pa{\sum_{i=1}^d
a_ix_i}^2+\lambda\Pa{\sum_{i=1}^d x_i-1}.
\end{eqnarray*}
Thus
\begin{eqnarray}
\frac{\partial L}{\partial x_i}&=&a_i^2-2 a_i\Pa{\sum_{i=1}^d
a_ix_i}+\lambda=0\quad(i=1,\ldots,d),\label{L-criticalpoints}
\\
\frac{\partial L}{\partial \lambda}&=&\sum_{i=1}^d x_i-1=0.\notag
\end{eqnarray}
Denote $m:=\sum_{i=1}^d a_ix_i$. Because \eqref{L-criticalpoints}
holds for all $i=1,\ldots,d$, we see that
\begin{eqnarray*}
\lambda=-a_i^2+2a_im=-a_j^2+2a_jm,
\end{eqnarray*}
that is, $m=\frac{a_i+a_j}2$ and $\lambda=a_ia_j$ for all distinct
indices $i$ and $j$. Furthermore, for all distinct indices $i$ and
$j$, the system of equations $\sum_{i=1}^d a_ix_i=m=\frac{a_i+a_j}2$
have no solution on $\Delta_{d-1}$. Hence there is no stationary
point on $\Delta_{d-1}$, and thus $f_{\max}$ is obtained on the
boundary $\partial\Delta_{d-1}$ of $\Delta_{d-1}$. Suppose, by
induction, that the conclusion holds for the case where
$d=k\geqslant2$, i.e., the function $f$ takes its maximal value
$f_{\max}=\frac14(a_k-a_1)^2$ on the point
$(x_1,x_2,\ldots,x_{k-1},x_k)=(\tfrac12,0,\ldots,0,\tfrac12)\in\partial\Delta_{k-1}$.

Next we consider the case where $d=k+1$, i.e., the extremal value of
$f(x_1,\ldots,x_{k+1})$ on $\partial\Delta_k$. If $\bsx\in
F_j\subset\Delta_k$, where $j\in\set{2,\ldots,k}$, then
$f_{k+1}(x_1,x_2,\cdots,x_{k+1})
=f_k(y_1,y_2,\cdots,y_{k})=\sum_{i=1}^{k} b_i^2y_i-(\sum_{i=1}^{k}
b_iy_i)^2,$ where $b_1=a_1,\cdots, b_{i-1}=a_{i-1}, b_i=a_{i+1},
\cdots,b_k=a_{k+1},$ it is obvious that $b_1<b_2<\cdots<b_k$. By the
previous assumption, we have
\begin{eqnarray*}
f_{\max}=\frac14(b_k-b_1)^2=\frac14(a_{k+1}-a_1)^2,
\end{eqnarray*}
and the maximal value is attained at
$(x_1,x_2,\ldots,x_{k},x_{k+1})=(\tfrac12,0,\ldots,0,\tfrac12)$;
similarly $f_{\max}=\frac14(a_{k+1}-a_2)^2$ is attained on $F_1$;
$f_{\max}=\frac14(a_k-a_1)^2$ is attained on $F_{k+1}$.

By comparing these extremal values, we know that
\begin{eqnarray*}
f_{\max}=\frac14(a_{k+1}-a_1)^2
\end{eqnarray*}
is attained on $\partial\Delta_k$ and the maximal value is attained
at $(x_1,x_2,\ldots,x_{k},x_{k+1})=(\tfrac12,0,\ldots,0,\tfrac12)$.
\end{proof}

\begin{remark}
By using spectral decomposition theorem to $\bsA$, we get that
$\bsA=\sum^d_{j=1} a_j\proj{a_j}$. Denote
$\ket{\psi}=\frac{\ket{a_1}+\ket{a_d}}{\sqrt{2}}$. Then we see that
\begin{eqnarray*}
\var_\psi(\bsA) = \Tr{\bsA^2\proj{\psi}} -  \Tr{\bsA\proj{\psi}}^2 =
\frac14(\lambda_{\max}(\bsA)-\lambda_{\min}(\bsA))^2.
\end{eqnarray*}
\end{remark}

\begin{prop}
Let $(\bsA_1,\ldots,\bsA_n)$ be an $n$-tuple of qudit observables
acting on $\mC^d$. Denote
$v(\bsA_k):=\frac12(\lambda_{\max}(\bsA_k)-\lambda_{\min}(\bsA_k))$,
where $k=1,\ldots,n$ and $\lambda_{\max/\min}(\bsA_k)$ stands for
the maximal/minimal eigenvalue of $\bsA$. Then
\begin{eqnarray*}
\cU^{(\mathrm{m})}_{\Delta\bsA_1,\ldots,\Delta\bsA_n}\subset
\Br{0,v(\bsA_1)}\times\cdots\times \Br{0,v(\bsA_n)}.
\end{eqnarray*}
\end{prop}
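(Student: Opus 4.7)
The plan is to reduce the claimed containment to the single-observable bound already established in Proposition~\ref{prop:varmax}. The uncertainty region $\cU^{(\mathrm{m})}_{\Delta\bsA_1,\ldots,\Delta\bsA_n}$ sits in $\mR^n_+$ by construction (standard deviations are nonnegative), so every coordinate of any element is automatically $\geqslant 0$; hence only the upper bound $\Delta_\rho\bsA_k \leqslant v(\bsA_k)$ needs attention.

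First, I would unpack the definition: a point in $\cU^{(\mathrm{m})}_{\Delta\bsA_1,\ldots,\Delta\bsA_n}$ is a tuple $(\Delta_\rho\bsA_1,\ldots,\Delta_\rho\bsA_n)$ arising from a \emph{single} density matrix $\rho\in\rD(\mC^d)$, but each coordinate only depends on $\rho$ through the associated observable $\bsA_k$. Consequently the containment factorizes across coordinates, and it is enough to show that for each fixed $k$ and each $\rho$, one has $\Delta_\rho\bsA_k \in [0,v(\bsA_k)]$.

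Second, I would invoke Proposition~\ref{prop:varmax} applied individually to the observable $\bsA_k$: it states
\begin{eqnarray*}
\max_{\rho\in\rD(\mC^d)}\var_\rho(\bsA_k) \;=\; \tfrac14\Pa{\lambda_{\max}(\bsA_k)-\lambda_{\min}(\bsA_k)}^2 \;=\; v(\bsA_k)^2.
\end{eqnarray*}
Since $\Delta_\rho\bsA_k = \sqrt{\var_\rho(\bsA_k)}\geqslant 0$ and $x\mapsto \sqrt{x}$ is monotone on $[0,\infty)$, taking square roots yields $0\leqslant \Delta_\rho\bsA_k \leqslant v(\bsA_k)$ for every $\rho$ and every $k$. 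Assembling these $n$ coordinate-wise bounds gives the asserted inclusion into the product box $[0,v(\bsA_1)]\times\cdots\times[0,v(\bsA_n)]$.

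There is no genuine obstacle here: once Proposition~\ref{prop:varmax} is in hand, the result is essentially a bookkeeping step, because the uncertainty region is defined coordinate-wise in terms of the same state $\rho$ but with no cross-observable coupling constraint beyond that. The only thing worth flagging is that the containment is generally strict — the product box is a \emph{superset}, since not every tuple $(x_1,\ldots,x_n)\in \prod_k[0,v(\bsA_k)]$ is simultaneously realizable by a common state $\rho$; this is precisely the nontrivial geometry that the later PDF analysis of the paper aims to capture.
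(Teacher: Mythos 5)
Your proof is correct and matches the paper's approach: the paper likewise disposes of this proposition in one line by combining Proposition~\ref{prop:convx} and Proposition~\ref{prop:varmax}, exactly the coordinate-wise reduction you describe. If anything, your version is marginally more economical, since Proposition~\ref{prop:varmax} is already stated as a maximum over all density matrices, so Proposition~\ref{prop:convx} is not strictly needed for the containment.
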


\begin{proof}
The proof is easily obtained by combining
Proposition~\ref{prop:convx} and Proposition~\ref{prop:varmax}.
\end{proof}

\section{PDFs of expectation values and uncertainties of qubit observables}\label{app:A}

Assume $\bsA$ is a non-degenerate positive matrix with eigenvalues
$\lambda_1(\bsA)<\cdots<\lambda_d(\bsA)$. Denote by
$\lambda(\bsA)=(\lambda_1(\bsA),\ldots,\lambda_d(\bsA))$. In view of
the speciality of pure state ensemble and noting
Propositions~\ref{prop:convx} and \ref{prop:varmax}, we will
consider only the variances of observable $\bsA$ over pure states.
In fact, the same problem is also considered very recently for mixed
state \cite{Zhang2021preprint}. Then the probability density
function of $\langle\bsA\rangle_\psi:=\Innerm{\psi}{\bsA}{\psi}$ is
defined by
\begin{eqnarray*}
f^{(d)}_{\langle\bsA\rangle}(r)
:=\int\delta(r-\langle\bsA\rangle_\psi)\dif\mu(\psi).
\end{eqnarray*}
Here $\dif\mu(\psi)$ is the so-called uniform probability measure,
which is invariant under the unitary rotations, can be realized by
the following way:
\begin{eqnarray*}
\dif\mu(\psi) =
\frac{\Gamma(d)}{2\pi^d}\delta(1-\norm{\psi})[\dif\psi],
\end{eqnarray*}
where $[\dif\psi]=\prod^d_{k=1}\dif x_k\dif y_k$ for
$\psi_k=x_k+\mathrm{i}y_k(k=1,\ldots,d)$, and $\Gamma(\cdot)$ is the
Gamma function. Thus
\begin{eqnarray*}
f^{(d)}_{\langle\bsA\rangle}(r)
=\Gamma(d)\int_{\real^d_+}\delta\Pa{r-\sum^d_{i=1}\lambda_i(\bsA)r_i}\delta\Pa{1-\sum^d_{i=1}r_i}\prod^d_{i=1}\dif
r_i.
\end{eqnarray*}
For completeness, we will give a different proof of it although the
following result is already obtained in \cite{Venuti2013}:

\begin{prop}\label{prop:expectn}
For a given quantum observable $\bsA$ with simple spectrum
$\lambda(\bsA)=(\lambda_1(\bsA),\ldots,\lambda_d(\bsA))$, where $
\lambda_1(\bsA)<\cdots<\lambda_d(\bsA)$, the probability density
function of $\langle\bsA\rangle_\psi$, where $\ket{\psi}$ a
Haar-distributed random pure state on $\complex^d$, is given by the
following:
\begin{eqnarray}\label{eq:qubitexp}
f^{(d)}_{\langle\bsA\rangle}(r)=(-1)^{d-1}(d-1)\sum^d_{i=1}\frac{(r-\lambda_i(\bsA))^{d-2}}{\prod_{j\in\hat
i}(\lambda_i(\bsA)-\lambda_j(\bsA))}H(r-\lambda_i(\bsA)),
\end{eqnarray}
where $\hat i:=\set{1,2,\ldots,d}\backslash\set{i}$ and $H$ is the
so-called Heaviside function, defined by $H(t)=1$ if $t>0$, $0$
otherwise. Thus the support of $f^{(d)}_{\langle\bsA\rangle}(r)$ is
the closed interval $[\lambda_1(\bsA),\lambda_n(\bsA)]$. In
particular, for $d=2$, we have
\begin{eqnarray*}
f^{(2)}_{\langle\bsA\rangle}(r)=\frac1{\lambda_2(\bsA)-\lambda_1(\bsA)}(H(r-\lambda_1(\bsA))-H(r-\lambda_2(\bsA))).
\end{eqnarray*}
\end{prop}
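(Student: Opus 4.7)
The natural route is Laplace transform inversion. Starting from the delta-function integral representation
$$f^{(d)}_{\langle\bsA\rangle}(r)=\Gamma(d)\int_{\real^d_+}\delta\Pa{r-\sum_i\lambda_i(\bsA)r_i}\delta\Pa{1-\sum_i r_i}\prod_i\dif r_i,$$
I would multiply by $e^{-sr}$ and integrate over $r\in[0,\infty)$ to collapse the first delta, giving
$$\tilde f(s):=\int_0^\infty e^{-sr}f^{(d)}_{\langle\bsA\rangle}(r)\dif r =\Gamma(d)\int_{\real^d_+}e^{-s\sum_i\lambda_i(\bsA)r_i}\delta\Pa{1-\sum_i r_i}\prod_i\dif r_i.$$

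The crucial step, which I expect to be the main obstacle, is evaluating this simplex Laplace integral. I would establish the identity
$$\int_{\real^d_+}e^{-\sum_i s_i r_i}\delta\Pa{1-\sum_i r_i}\prod_i\dif r_i = \sum_{i=1}^d\frac{e^{-s_i}}{\prod_{j\in\hat i}(s_j-s_i)}$$
for pairwise distinct $s_1,\ldots,s_d$ by induction on $d$. The base case $d=2$ follows by eliminating $r_2=1-r_1$ and computing a one-variable integral directly. For the inductive step, I would integrate out $r_d=1-\sum_{i<d}r_i$, apply the $(d-1)$-variable identity to the resulting integrand on $\Delta_{d-2}$ with exponents shifted by $-s_d$, and collapse the telescoping partial fractions
$$\frac{1}{(s_j-s_i)(s_d-s_i)} = \frac{1}{s_j-s_d}\Pa{\frac{1}{s_d-s_i}-\frac{1}{s_j-s_i}}$$
to recover the target form; alternatively, one can recognize the right-hand side as the iterated divided difference of $e^{-x}$ at the nodes $s_1,\ldots,s_d$ and invoke the Hermite--Genocchi formula. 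Substituting $s_i=s\lambda_i(\bsA)$ and pulling out $s^{d-1}$ gives
$$\tilde f(s) = \frac{\Gamma(d)}{s^{d-1}}\sum_i\frac{e^{-s\lambda_i(\bsA)}}{\prod_{j\in\hat i}(\lambda_j(\bsA)-\lambda_i(\bsA))}.$$

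Termwise inverse Laplace transform via $\mathcal L^{-1}\Br{s^{-(d-1)}e^{-s\lambda_i(\bsA)}}(r) = \frac{(r-\lambda_i(\bsA))^{d-2}}{(d-2)!}H(r-\lambda_i(\bsA))$ then produces the claimed formula, after using $\prod_{j\in\hat i}(\lambda_j-\lambda_i)=(-1)^{d-1}\prod_{j\in\hat i}(\lambda_i-\lambda_j)$ to match the sign convention of \eqref{eq:qubitexp}. For the support, all Heaviside factors vanish for $r<\lambda_1(\bsA)$; for $r>\lambda_d(\bsA)$ every Heaviside equals $1$ and the remaining sum vanishes by the standard divided-difference identity $\sum_i p(\lambda_i)/\prod_{j\in\hat i}(\lambda_i-\lambda_j)=0$ valid for any polynomial $p$ with $\deg p\leq d-2$, applied to $p(x)=(r-x)^{d-2}$. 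This is consistent with Proposition~\ref{prop:varmax} and with the elementary fact that $\langle\bsA\rangle_\psi$ lies in the convex hull of the spectrum. The $d=2$ formula follows by direct substitution of $d=2$ into the general expression.
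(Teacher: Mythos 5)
Your proposal is correct and follows the same overall strategy as the paper: Laplace-transform the simplex representation of $f^{(d)}_{\langle\bsA\rangle}$ in $r$, evaluate the resulting exponential integral over the probability simplex in closed form, and invert termwise. The only genuine divergence is in how that simplex integral is evaluated. The paper introduces an auxiliary variable $t$ in the constraint $\delta(t-\sum_ir_i)$ and takes a second Laplace transform $t\to x$, which factorizes the integral into $\Gamma(d)\prod_i(s\lambda_i(\bsA)+x)^{-1}$; a partial-fraction inversion in $x$ (cited to \cite{zhang2018qip}) then yields the same sum $\sum_ie^{-s\lambda_i(\bsA)}/\bigl((-s)^{d-1}\prod_{j\in\hat i}(\lambda_i(\bsA)-\lambda_j(\bsA))\bigr)$ that you obtain by induction on $d$ or by the Hermite--Genocchi formula. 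Your route has the advantage of being self-contained (no external citation for the inversion) and of exhibiting the answer as a divided difference of $e^{-x}$, which you then exploit to actually prove that the density vanishes for $r>\lambda_d(\bsA)$ via the identity $\sum_ip(\lambda_i)/\prod_{j\in\hat i}(\lambda_i-\lambda_j)=0$ for $\deg p\leqslant d-2$ --- a verification of the support claim that the paper asserts but does not carry out. The paper's double-transform trick is somewhat more mechanical and is the template reused for the two-variable transforms in Section 4. One technicality shared by both arguments: the unilateral Laplace transform in $r$ is legitimate here because the paper assumes $\bsA$ positive, so $\langle\bsA\rangle_\psi\geqslant 0$; for a general spectrum one would work with the Fourier transform or a bilateral transform, with no essential change.
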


\begin{proof}
By performing Laplace transformation $(r\to s)$ of
$f^{(d)}_{\langle\bsA\rangle}(r)$, we get that
\begin{eqnarray*}
\sL(f^{(d)}_{\langle\bsA\rangle})(s) = \Gamma(d)\int
\exp\Pa{-s\sum^d_{i=1}\lambda_i(\bsA)r_i}\delta\Pa{1-\sum^d_{i=1}r_i}\prod^d_{i=1}\dif
r_i.
\end{eqnarray*}
Let
$$
F_s(t) :=\Gamma(d)\int
\exp\Pa{-s\sum^d_{i=1}\lambda_i(\bsA)r_i}\delta\Pa{t-\sum^d_{i=1}r_i}\prod^d_{i=1}\dif
r_i.
$$
Still by performing Laplace transformation $(t\to x)$ of $F_s(t)$:
\begin{eqnarray*}
\sL(F_s)(x) &=&\Gamma(d)\int
\exp\Pa{-s\sum^d_{i=1}\lambda_i(\bsA)r_i}\exp\Pa{-x\sum^d_{i=1}r_i}\prod^d_{i=1}\dif
r_i\\
&=&\Gamma(d)\prod^d_{i=1}\int^\infty_0 \exp\Pa{-(s\lambda_i(\bsA)+x)r_i}\dif r_i\\
&=&\frac{\Gamma(d)}{\prod^d_{i=1}(s\lambda_i(\bsA)+x)},
\end{eqnarray*}
implying that \cite{zhang2018qip}
\begin{eqnarray*}
F_s(t) =
\Gamma(d)\sum^d_{i=1}\frac{\exp\Pa{-\lambda_i(\bsA)st}}{(-s)^{d-1}\prod_{j\in\hat
i}(\lambda_i(\bsA)-\lambda_j(\bsA))},
\end{eqnarray*}
where $\hat i:=\Set{1,\ldots,d}\backslash\set{i}$. Thus
\begin{eqnarray*}
\sL(f^{(d)}_{\langle\bsA\rangle})(s)=F_s(1) =
\Gamma(d)\sum^d_{i=1}\frac{\exp\Pa{-\lambda_i(\bsA)s}}{(-s)^{d-1}\prod_{j\in\hat
i}(\lambda_i(\bsA)-\lambda_j(\bsA))}.
\end{eqnarray*}
Therefore, we get that
\begin{eqnarray*}
f^{(d)}_{\langle\bsA\rangle}(r)
=(-1)^{d-1}(d-1)\sum^d_{i=1}\frac{H(r-\lambda_i(\bsA))(r-\lambda_i(\bsA))^{d-2}}{\prod_{j\in\hat
i}(\lambda_i(\bsA)-\lambda_j(\bsA))},
\end{eqnarray*}
where $H(r-\lambda_i(\bsA))$ is the so-called Heaviside function,
defined by $H(t)=1$ if $t>0$; otherwise $0$. The support of this pdf
is the closed interval $[l,u]$ where
$$
l=\min\Set{\lambda_i(\bsA):i=1,\ldots,d},\quad
u=\max\Set{\lambda_i(\bsA):i=1,\ldots,d}.
$$
The normalization of $f^{(d)}_{\langle\bsA\rangle}(r)$ (i.e.,
$\int_\real f^{(d)}_{\langle\bsA\rangle}(r)\dif r=1$) can be checked
by assuming $\lambda_1<\lambda_2<\cdots<\lambda_d$, then
$[l,u]=[\lambda_1,\lambda_n]$ since
$f^{(d)}_{\langle\bsA\rangle}(r)$ is a symmetric of $\lambda_i$'s.
\end{proof}

\subsection{The case for one qubit observable}

Let us now turn to the qubit observables. Any qubit observable
$\bsA$, which may be parameterized as
\begin{equation}\label{AA}
\bsA=a_0\I+\bsa\cdot\boldsymbol{\sigma},\quad (a_0,\bsa)\in\real^4,
\end{equation}
where $\I$ is the identity matrix on the qubit Hilbert space
$\complex^2$, and $\boldsymbol{\sigma}=(\sigma_1,\sigma_2,\sigma_3)$
are the vector of the standard Pauli matrices:
\begin{eqnarray*}
\sigma_1=\Pa{\begin{array}{cc}
               0 & 1 \\
               1 & 0
             \end{array}
},\quad \sigma_2=\Pa{\begin{array}{cc}
               0 & -\mathrm{i} \\
               \mathrm{i} & 0
             \end{array}
},\quad \sigma_3=\Pa{\begin{array}{cc}
               1 & 0 \\
               0 & -1
             \end{array}
}.
\end{eqnarray*}
Without loss of generality, we assume that our qubit observables are
of simple eigenvalues, otherwise the problem is trivial. Thus the
two eigenvalues of $\bsA$ are
\begin{eqnarray*}
\lambda_k(\bsA)=a_0+(-1)^ka, \qquad k=1,2,
\end{eqnarray*}
with $a:=\abs{\bsa}=\sqrt{a_1^2+a_2^2+a_3^2}>0$ being the length of
vector $\bsa =(a_1,a_2,a_3)\in \real^3$. Thus \eqref{eq:qubitexp}
becomes
\begin{eqnarray*}
f^{(2)}_{\langle\bsA\rangle}(r) =
\frac1{\lambda_2(\bsA)-\lambda_1(\bsA)}[H(r-\lambda_1(\bsA))-H(r-\lambda_2(\bsA))].
\end{eqnarray*}

\begin{thrm}\label{th:vardis}
For the qubit observable $\bsA$ defined by Eq.~\eqref{AA}, the
probability density function of $\Delta_\psi \bsA$, where $\psi$ is
a Haar-distributed random pure state on $\complex^2$, is given by
\begin{eqnarray*}
 f^{(2)}_{\Delta\bsA}(x) = \frac x{\abs{\bsa}\sqrt{\abs{\bsa}^2-x^2}},\qquad
x\in[0,\abs{\bsa}).
\end{eqnarray*}
\end{thrm}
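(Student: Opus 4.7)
The plan is to reduce the problem to a one-dimensional change-of-variables calculation by exploiting the algebraic structure of qubit observables, and then to apply either the standard CDF method or, more in the spirit of the paper, the delta-composition rule from the Definition in the excerpt.

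First I would rewrite the variance purely in terms of the expectation value of the traceless part $\bsa\cdot\boldsymbol{\sigma}$. Using $(\bsa\cdot\boldsymbol{\sigma})^2 = \abs{\bsa}^2\I$, one has $\bsA^2 = (a_0^2+\abs{\bsa}^2)\I + 2a_0\,\bsa\cdot\boldsymbol{\sigma}$, so that setting $t := \langle \bsa\cdot\boldsymbol{\sigma}\rangle_\psi$ one obtains
\begin{equation*}
\var_\psi(\bsA) = (a_0^2+\abs{\bsa}^2) + 2a_0 t - (a_0+t)^2 = \abs{\bsa}^2 - t^2,
\end{equation*}
and hence $\Delta_\psi\bsA = \sqrt{\abs{\bsa}^2-t^2}$. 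This reduces the problem to determining the law of the single random variable $t$.

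Next, since $\bsa\cdot\boldsymbol{\sigma}$ is a qubit observable with simple spectrum $\set{-\abs{\bsa},+\abs{\bsa}}$, Proposition~\ref{prop:expectn} (in the $d=2$ case stated there) immediately gives the density of $t$ as the uniform density $\frac{1}{2\abs{\bsa}}$ on $[-\abs{\bsa},\abs{\bsa}]$. Thus the PDF of $\Delta_\psi\bsA$ can be written as
\begin{equation*}
f^{(2)}_{\Delta\bsA}(x) \;=\; \int_{-\abs{\bsa}}^{\abs{\bsa}} \delta\bigl(x - \sqrt{\abs{\bsa}^2-t^2}\bigr)\,\frac{\dif t}{2\abs{\bsa}}.
\end{equation*}

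Finally I would apply the composite delta formula from the Definition. For fixed $x\in(0,\abs{\bsa})$ the function $g(t)=x-\sqrt{\abs{\bsa}^2-t^2}$ has two simple zeros $t_\pm = \pm\sqrt{\abs{\bsa}^2-x^2}$, with $\abs{g'(t_\pm)} = \sqrt{\abs{\bsa}^2-x^2}/x$, and $Z(g)\cap Z(g')=\emptyset$ on this range. Substituting yields the two contributions $\frac{x}{\sqrt{\abs{\bsa}^2-x^2}}$ each, which after integration against $\frac{1}{2\abs{\bsa}}$ collapses to $\frac{x}{\abs{\bsa}\sqrt{\abs{\bsa}^2-x^2}}$, and the support is $[0,\abs{\bsa})$ because $t^2\le\abs{\bsa}^2$. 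Alternatively, and essentially equivalently, I could compute the CDF directly as $P(\sqrt{\abs{\bsa}^2-T^2}\le x)=1-\sqrt{\abs{\bsa}^2-x^2}/\abs{\bsa}$ and differentiate.

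There is no serious obstacle: the only subtle point is the boundary behavior, since $f^{(2)}_{\Delta\bsA}$ is integrable but unbounded near $x=\abs{\bsa}$ and vanishes at $x=0$; both are consistent with the geometric fact that the pure state sphere in $\mathbb{C}^2$ covers the Bloch ball uniformly, so that $t$ concentrates neither at the spectrum endpoints nor at the center in a way that would produce additional mass. A normalization check $\int_0^{\abs{\bsa}} f^{(2)}_{\Delta\bsA}(x)\,\dif x = 1$ via the substitution $x=\abs{\bsa}\sin\theta$ would close the argument.
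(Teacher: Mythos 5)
Your proof is correct and follows essentially the same route as the paper: both reduce the problem to the uniform law of the qubit expectation value (Proposition~\ref{prop:expectn} with $d=2$) and then perform a delta-function change of variables. The only cosmetic differences are that you center the variable at $a_0$ and use $(\bsa\cdot\boldsymbol{\sigma})^2=\abs{\bsa}^2\I$ in place of the paper's Cayley--Hamilton identity $\bsA^2=\Tr{\bsA}\bsA-\det(\bsA)\I$, and you apply the composite-delta rule directly to $x-\sqrt{\abs{\bsa}^2-t^2}$ rather than passing through $\delta(x^2-\Delta_\psi^2\bsA)$; the computation and result agree.
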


\begin{proof}
Note that
\begin{eqnarray}\label{eq:2nddelta}
\delta(r^2-r^2_0)=\frac1{2\abs{r_0}}\Br{\delta(r-r_0)+\delta(r+r_0)}.
\end{eqnarray}
For $x\geqslant0$, because
\begin{eqnarray*}
\delta(x^2-\Delta_\psi \bsA^2) = \frac1{2x}\Br{\delta(x+\Delta_\psi
\bsA)+\delta(x-\Delta_\psi \bsA)}=\frac1{2x}\delta(x-\Delta_\psi
\bsA),
\end{eqnarray*}
we see that
\begin{eqnarray*}
f^{(2)}_{\Delta\bsA}(x) = \int\delta(x-\Delta_\psi
\bsA)\dif\mu(\psi) = 2x\int\delta\Pa{x^2-\Delta^2_\psi
\bsA}\dif\mu(\psi).
\end{eqnarray*}
For any complex $2\times 2$ matrix $\bsA$,
$\bsA^2=\Tr{\bsA}\bsA-\det(\bsA)\I$. Then $\Delta^2_\psi
\bsA=(\langle\bsA\rangle_\psi-\lambda_1(\bsA))(\lambda_2(\bsA)-\langle\bsA\rangle_\psi)$
\begin{eqnarray*}
\delta\Pa{x^2-\Delta^2_\psi \bsA} &=&
\delta\Pa{x^2-(\langle\bsA\rangle_\psi-\lambda_1(\bsA))(\lambda_2(\bsA)-\langle\bsA\rangle_\psi)}.
\end{eqnarray*}
In particular, we see that
\begin{eqnarray*}
f^{(2)}_{\Delta\bsA}(x) &=&
2x\int^{\lambda_2(\bsA)}_{\lambda_1(\bsA)}
\dif r\delta\Pa{x^2-(r-\lambda_1(\bsA))(\lambda_2(\bsA)-r)}\int_{\complex^2}\delta(r-\langle\bsA\rangle_\psi)\dif\mu(\psi)\notag\\
&=& 2x\int^{\lambda_2(\bsA)}_{\lambda_1(\bsA)} \dif
rf^{(2)}_{\langle\bsA\rangle}(r)\delta\Pa{x^2-(r-\lambda_1(\bsA))(\lambda_2(\bsA)-r)}.
\end{eqnarray*}
Denote $f_x(r)=x^2-(r-\lambda_1(\bsA))(\lambda_2(\bsA)-r)$. Thus
$\partial_rf_x(r)=2r-\lambda_1(\bsA)-\lambda_2(\bsA)$. Then
$f_x(r)=0$ has two distinct roots in
$[\lambda_1(\bsA),\lambda_2(\bsA)]$ if and only if
$x\in\left[0,\frac{V_2(\lambda(\bsA))}2\right)$, where
$V_2(\lambda(\bsA))=\lambda_2(\bsA)-\lambda_1(\bsA)$. Now the roots
are given by
\begin{eqnarray*}
r_\pm(x)=\frac{\lambda_1(\bsA)+\lambda_2(\bsA)\pm\sqrt{V_2(\lambda(\bsA))^2-4x^2}}2.
\end{eqnarray*}
Thus
\begin{eqnarray*}
\delta\Pa{f_x(r)}=\frac1{\abs{\partial_{r=r_+(x)}
f_x(r)}}\delta_{r_+(x)}+\frac1{\abs{\partial_{r=r_-(x)}
f_x(r)}}\delta_{r_-(x)},
\end{eqnarray*}
implying that
\begin{eqnarray*}
f^{(2)}_{\Delta \bsA}(x) =
\frac{4x}{V_2(\lambda(\bsA))\sqrt{V_2(\lambda(\bsA))^2-4x^2}}.
\end{eqnarray*}

Now for $\bsA=a_0\I+\bsa\cdot\boldsymbol{\sigma}$, we have
$V_2(\lambda(\bsA))=2\abs{\bsa}$. Substituting this into the above
expression, we get the desired result:
\begin{eqnarray*}
f^{(2)}_{\Delta \bsA}(x) = \frac
x{\abs{\bsa}\sqrt{\abs{\bsa}^2-x^2}},
\end{eqnarray*}
where $x\in[0,\abs{\bsa})$. This is the desired result.
\end{proof}

\subsection{The case for two qubit
observables}

Let $\bsA=a_0+\bsa\cdot\boldsymbol{\sigma}$ and
$\bsB=b_0+\bsb\cdot\boldsymbol{\sigma}$
\begin{eqnarray*}
&& f^{(2)}_{\langle\bsA\rangle,\langle\bsB\rangle}(r,s) =
\int\delta(r-\Innerm{\psi}{\bsA}{\psi})\delta(s-\Innerm{\psi}{\bsB}{\psi})\dif\mu(\psi)\\
&&=\frac1{(2\pi)^2}\int_{\real^2}\dif\alpha\dif\beta
\exp\Pa{\mathrm{i}(r\alpha+s\beta)}\int
\exp\Pa{-\mathrm{i}\Innerm{\psi}{\alpha\bsA+\beta\bsB}{\psi}}\dif\mu(\psi),
\end{eqnarray*}
where
\begin{eqnarray*}
&&\int
\exp\Pa{-\mathrm{i}\Innerm{\psi}{\alpha\bsA+\beta\bsB}{\psi}}\dif\mu(\psi)
=
\int^{\lambda_+(\alpha\bsA+\beta\bsB)}_{\lambda_-(\alpha\bsA+\beta\bsB)}
\exp\Pa{-\mathrm{i}t}f_2(t)\dif t\\
&&=\frac1{2\abs{\alpha\bsa+\beta\bsb}}\int^{\lambda_+(\alpha\bsA+\beta\bsB)}_{\lambda_-(\alpha\bsA+\beta\bsB)}
\exp\Pa{-\mathrm{i}t}\dif
t=\exp\Pa{-\mathrm{i}(a_0\alpha+b_0\beta)}\frac{\sin\abs{\alpha\bsa+\beta\bsb}}{\abs{\alpha\bsa+\beta\bsb}},
\end{eqnarray*}
for
$$
\lambda_\pm(\alpha\bsA+\beta\bsB)=\alpha a_0+\beta b_0 \pm
\abs{\alpha\bsa+\beta\bsb}.
$$
Now that
\begin{eqnarray*}
f^{(2)}_{\langle\alpha\bsA+\beta\bsB\rangle}(r)
=\frac1{\lambda_2-\lambda_1}(H(r-\lambda_1)-H(r-\lambda_2)),
\end{eqnarray*}
therefore
\begin{eqnarray*}
f^{(2)}_{\langle\bsA\rangle,\langle\bsB\rangle}(r,s)=\frac1{(2\pi)^2}\int_{\real^2}\dif\alpha\dif\beta
\exp\Pa{\mathrm{i}((r-a_0)\alpha+(s-b_0)\beta)}\frac{\sin\abs{\alpha\bsa+\beta\bsb}}{\abs{\alpha\bsa+\beta\bsb}}.
\end{eqnarray*}
(i) If $\set{\bsa,\bsb}$ is linearly independent, then the following
matrix $\bsT_{\bsa,\bsb}$ is invertible, and thus
\begin{eqnarray*}
\Pa{\begin{array}{c}
      \tilde\alpha \\
      \tilde\beta
    \end{array}
} = \bsT^{\frac12}_{\bsa,\bsb}\Pa{\begin{array}{c}
      \alpha \\
      \beta
    \end{array}
},\quad\text{where }\bsT_{\bsa,\bsb} = \Pa{\begin{array}{cc}
                               \Inner{\bsa}{\bsa} & \Inner{\bsa}{\bsb} \\
                               \Inner{\bsa}{\bsb} & \Inner{\bsb}{\bsb}
                             \end{array}
}.
\end{eqnarray*}
Thus we see that
\begin{eqnarray*}
&&f^{(2)}_{\langle\bsA\rangle,\langle\bsB\rangle}(r,s)=\frac1{(2\pi)^2\sqrt{\det(\bsT_{\bsa,\bsb})}}\int_{\real^2}\dif\tilde\alpha\dif\tilde\beta
\exp\Pa{\mathrm{i}((\tilde r-\tilde a_0)\alpha+(\tilde s-\tilde
b_0)\tilde
\beta)}\frac{\sin\sqrt{\tilde\alpha^2+\tilde\beta^2}}{\sqrt{\tilde\alpha^2+\tilde\beta^2}}\\
&&=\frac1{(2\pi)^2\sqrt{\det(\bsT_{\bsa,\bsb})}}\int^\infty_0\dif t
\sin t \int^{2\pi}_0\dif\theta \exp\Pa{\mathrm{i}t((\tilde r-\tilde
a_0)\cos\theta+(\tilde s-\tilde b_0)\sin\theta)}\\
&&=\frac1{(2\pi)\sqrt{\det(\bsT_{\bsa,\bsb})}}\int^\infty_0\dif
t\sin t J_0\Pa{t\sqrt{(\tilde r-\tilde a_0)^2+(\tilde s-\tilde
b_0)^2}},
\end{eqnarray*}
where $J_0(z)$ is the so-called Bessel function of first kind,
defined by
\begin{eqnarray*}
J_0(z)=\frac1\pi\int^\pi_0\cos(z\cos\theta)\dif\theta.
\end{eqnarray*}
Therefore
\begin{eqnarray*}
f^{(2)}_{\langle\bsA\rangle,\langle\bsB\rangle}(r,s)
=\frac1{2\pi\sqrt{\det(\bsT_{\bsa,\bsb})}}\int^{+\infty}_0\dif t\sin
tJ_0\Pa{t\cdot\sqrt{(r-a_0,
s-b_0)\bsT_{\bsa,\bsb}^{-1}\Pa{\begin{array}{c}
      r-a_0 \\
      s-b_0
    \end{array}
}}},
\end{eqnarray*}
where
\begin{eqnarray*}
\int^{\infty}_0J_0(\lambda t)\sin(t)\dif t =
\frac1{\sqrt{1-\lambda^2}}H(1-\abs{\lambda}).
\end{eqnarray*}
Therefore
\begin{eqnarray*}
f^{(2)}_{\langle\bsA\rangle,\langle\bsB\rangle}(r,s)
=\frac{H(1-\omega_{\bsA,\bsB}(r,s))}{2\pi\sqrt{\det(\bsT_{\bsa,\bsb})(1-\omega^2_{\bsA,\bsB}(r,s))}},
\end{eqnarray*}
where
\begin{eqnarray}\label{eq:omegaAB}
\omega_{\bsA,\bsB}(r,s)=\sqrt{(r-a_0,
s-b_0)\bsT_{\bsa,\bsb}^{-1}\Pa{\begin{array}{c}
      r-a_0 \\
      s-b_0
    \end{array}
}}.
\end{eqnarray}
(ii) If $\set{\bsa,\bsb}$ is linearly dependent, without loss of
generality, let $\bsb=\kappa\cdot\bsa$ for some nonzero
$\kappa\neq0$, then
\begin{eqnarray*}
f^{(2)}_{\langle\bsA\rangle,\langle\bsB\rangle}(r,s)=\frac1{(2\pi)^2}\int_{\mR^2}\dif\alpha\dif\beta
\exp\Pa{\mathrm{i}((r-a_0)\alpha+(s-b_0)\beta)}\frac{\sin
(a\abs{\alpha+\beta\kappa})}{a\abs{\alpha+\beta\kappa}}.
\end{eqnarray*}
Here $a=\abs{\bsa}$. We perform the change of variables
$(\alpha,\beta)\to(\alpha',\beta')$, where
$\alpha'=\alpha+\kappa\beta$ and $\beta'=\beta$. We get its
Jacobian, given by
\begin{eqnarray*}
\det\Pa{\frac{\partial(\alpha',\beta')}{\partial(\alpha,\beta)}}=\abs{\begin{array}{cc}
                                                                        1 & \kappa \\
                                                                        0 &
                                                                        1
                                                                      \end{array}
}=1\neq0.
\end{eqnarray*}
Thus
\begin{eqnarray*}
f^{(2)}_{\langle\bsA\rangle,\langle\bsB\rangle}(r,s)&=&\frac1{(2\pi)^2}\iint\dif\alpha'\dif\beta'
\exp\Pa{\mathrm{i}((r-a_0)(\alpha'-\kappa\beta')+(s-b_0)\beta')}\frac{\sin
(a\abs{\alpha'})}{a\abs{\alpha'}}\\
&=&\frac1{2\pi}\int
\exp\Pa{\mathrm{i}((s-b_0)-\kappa(r-a_0))\beta'}\dif\beta'\times\frac1{2\pi}\int\dif\alpha'
\exp\Pa{\mathrm{i}(r-a_0)\alpha'}\frac{\sin(
a\abs{\alpha'})}{a\abs{\alpha'}}\\
&=&\delta((s-b_0)-\kappa(r-a_0))f^{(2)}_{\langle\bsA\rangle}(r).
\end{eqnarray*}

\begin{prop}\label{prop:expdis}
For a pair of qubit observables
$\bsA=a_0\I+\bsa\cdot\boldsymbol{\sigma}$ and
$\bsB=b_0\I+\bsb\cdot\boldsymbol{\sigma}$, (i) if $\set{\bsa,\bsb}$
is linearly independent, then the pdf of
$(\langle\bsA\rangle_\psi,\langle\bsB\rangle_\psi)$, where
$\psi\in\complex^2$ a Haar-distributed pure state, is given by
\begin{eqnarray*}
f^{(2)}_{\langle\bsA\rangle,\langle\bsB\rangle}(r,s)
=\frac{H(1-\omega_{\bsA,\bsB}(r,s))}{2\pi\sqrt{\det(\bsT_{\bsa,\bsb})(1-\omega^2_{\bsA,\bsB}(r,s))}}.
\end{eqnarray*}
(ii) If $\set{\bsa,\bsb}$ is linearly dependent, without loss of
generality, let $\bsb=\kappa\cdot\bsa$, then
\begin{eqnarray*}
f^{(2)}_{\langle\bsA\rangle,\langle\bsB\rangle}(r,s)=\delta((s-b_0)-\kappa(r-a_0))f^{(2)}_{\langle\bsA\rangle}(r).
\end{eqnarray*}
\end{prop}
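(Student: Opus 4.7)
The plan is to compute the joint characteristic function of $(\langle\bsA\rangle_\psi,\langle\bsB\rangle_\psi)$ by inserting Fourier representations of the two delta functions. Writing
\begin{eqnarray*}
f^{(2)}_{\langle\bsA\rangle,\langle\bsB\rangle}(r,s) = \frac{1}{(2\pi)^2}\int_{\real^2}\dif\alpha\dif\beta\, e^{\mathrm{i}(r\alpha+s\beta)}\int e^{-\mathrm{i}\langle\psi|\alpha\bsA+\beta\bsB|\psi\rangle}\dif\mu(\psi),
\end{eqnarray*}
reduces the inner integral to the characteristic function of the single observable $\alpha\bsA+\beta\bsB$, whose PDF was already computed in Proposition~\ref{prop:expectn} in the qubit case. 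Since $\alpha\bsA+\beta\bsB = (\alpha a_0+\beta b_0)\I + (\alpha\bsa+\beta\bsb)\cdot\boldsymbol{\sigma}$ has eigenvalues $\alpha a_0+\beta b_0 \pm |\alpha\bsa+\beta\bsb|$, the inner integral equals $e^{-\mathrm{i}(a_0\alpha+b_0\beta)}\sin|\alpha\bsa+\beta\bsb|/|\alpha\bsa+\beta\bsb|$ by a direct computation.

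In case (i), when $\{\bsa,\bsb\}$ is linearly independent, I would make the linear change of variables $(\tilde\alpha,\tilde\beta)^{\t}=\bsT^{1/2}_{\bsa,\bsb}(\alpha,\beta)^{\t}$, which has Jacobian $\sqrt{\det\bsT_{\bsa,\bsb}}$ and turns $|\alpha\bsa+\beta\bsb|$ into $\sqrt{\tilde\alpha^2+\tilde\beta^2}$. Switching to polar coordinates $(\tilde\alpha,\tilde\beta)=(t\cos\theta,t\sin\theta)$, the angular integral produces a Bessel function via $J_0(z)=\tfrac{1}{\pi}\int_0^\pi\cos(z\cos\theta)\dif\theta$, leaving a one-dimensional integral $\int_0^\infty \sin t\, J_0(\lambda t)\,\dif t$ with $\lambda=\omega_{\bsA,\bsB}(r,s)$ as defined in Eq.~\eqref{eq:omegaAB}. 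The classical Hankel-type identity $\int_0^\infty J_0(\lambda t)\sin t\,\dif t = H(1-|\lambda|)/\sqrt{1-\lambda^2}$ then yields exactly the claimed formula.

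In case (ii), when $\bsb=\kappa\bsa$, the vector $|\alpha\bsa+\beta\bsb|=|\bsa|\cdot|\alpha+\kappa\beta|$ is degenerate and the substitution $\alpha'=\alpha+\kappa\beta$, $\beta'=\beta$ (with unit Jacobian) decouples the double integral into the product of $\frac{1}{2\pi}\int e^{\mathrm{i}((s-b_0)-\kappa(r-a_0))\beta'}\dif\beta'$, which is $\delta((s-b_0)-\kappa(r-a_0))$, and $\frac{1}{2\pi}\int e^{\mathrm{i}(r-a_0)\alpha'}\sin(a|\alpha'|)/(a|\alpha'|)\dif\alpha'$, which is recognized as $f^{(2)}_{\langle\bsA\rangle}(r)$ by the same Fourier-inversion argument applied to the one-observable case. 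Combining these factors gives the stated identity.

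The main technical obstacle is the Bessel-function integral $\int_0^\infty J_0(\lambda t)\sin t\,\dif t = H(1-|\lambda|)/\sqrt{1-\lambda^2}$, which produces the characteristic function of the support and the inverse-square-root singularity simultaneously; everything else is bookkeeping with delta functions and linear changes of variables. The rest of the proof is a careful orchestration of Fourier inversion, a whitening transformation by $\bsT^{1/2}_{\bsa,\bsb}$, and polar coordinates, plus the degenerate-case substitution for (ii).
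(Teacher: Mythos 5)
Your proposal follows exactly the paper's own derivation: Fourier representation of the two delta functions, reduction of the inner integral to the characteristic function of $\alpha\bsA+\beta\bsB$ (giving $e^{-\mathrm{i}(a_0\alpha+b_0\beta)}\sin\abs{\alpha\bsa+\beta\bsb}/\abs{\alpha\bsa+\beta\bsb}$), the whitening change of variables by $\bsT^{1/2}_{\bsa,\bsb}$ followed by polar coordinates and the Hankel identity $\int_0^\infty J_0(\lambda t)\sin t\,\dif t = H(1-\abs{\lambda})/\sqrt{1-\lambda^2}$ for case (i), and the unit-Jacobian substitution $\alpha'=\alpha+\kappa\beta$ for case (ii). The argument is correct and essentially identical to the one in the paper.
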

From Proposition~\ref{prop:expdis}, we can directly infer the
results obtained in \cite{Gutkin2013,Gallay2012}.

We now turn to a pair of qubit observables
\begin{eqnarray}\label{AB}
\bsA =a_0\I+\bsa\cdot\boldsymbol{\sigma}, \quad \bsB
=b_0\I+\bsb\cdot\boldsymbol{\sigma} , \quad (a_0,\bsa),(b_0,\bsb)\in
\real^4,
\end{eqnarray}
whose uncertainty region
\begin{eqnarray}\label{2DUR}
\cU_{\Delta\bsA,\Delta\bsB}:=\Set{(\Delta_\psi \bsA,\Delta_\psi
\bsB)\in\real^2_+: \ket{\psi}\in \complex^2}
\end{eqnarray}
was proposed by Busch and Reardon-Smith \cite{Busch2019} in the
mixed state case. We consider the probability distribution density
\begin{eqnarray*}
f^{(2)}_{\Delta\bsA,\Delta\bsB}(x,y) := \int\delta(x-\Delta_\psi
\bsA)\delta(y-\Delta_\psi \bsB)\dif\mu (\psi),
\end{eqnarray*}
on the uncertainty region defined by Eq.~\eqref{2DUR}. Denote
\begin{eqnarray*}
\bsT_{\bsa,\bsb}:=\Pa{\begin{array}{cc}
                        \Inner{\bsa}{\bsa} & \Inner{\bsa}{\bsb} \\
                        \Inner{\bsb}{\bsa} & \Inner{\bsb}{\bsb}
                      \end{array}
}.
\end{eqnarray*}

\begin{thrm}\label{th:varAvarB}
The joint probability distribution density of the uncertainties
$(\Delta_\psi \bsA,\Delta_\psi \bsB)$ for a pair of qubit
observables defined by Eq.~\eqref{AB}, where $\psi$ is a
Haar-distributed random pure state on $\complex^2$, is given by
\begin{eqnarray*}
f^{(2)}_{\Delta\bsA,\Delta\bsB}(x,y) = \frac{2xy\sum_{j\in\set{\pm}}
f^{(2)}_{\langle\bsA\rangle,\langle\bsB\rangle}(r_+(x),s_j(y))}{\sqrt{(a^2-x^2)(b^2-y^2)}},
\end{eqnarray*}
where $a=\abs{\bsa}>0, b=\abs{\bsb}>0$,
$r_\pm(x)=a_0\pm\sqrt{a^2-x^2}, s_\pm(y)=b_0\pm\sqrt{b^2-y^2}$.
\end{thrm}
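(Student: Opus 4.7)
The plan is to reduce the joint PDF of uncertainties $(\Delta_\psi\bsA,\Delta_\psi\bsB)$ to an integral against the already-computed joint PDF $f^{(2)}_{\langle\bsA\rangle,\langle\bsB\rangle}$ of Proposition~\ref{prop:expdis}, by exploiting the fact that each qubit variance is a quadratic function of the corresponding expectation value. Since any qubit observable satisfies $\bsA^2 = 2a_0\bsA - (a_0^2 - a^2)\I$ (and similarly for $\bsB$), taking expectation values gives
$$\var_\psi\bsA = a^2 - (\langle\bsA\rangle_\psi - a_0)^2, \qquad \var_\psi\bsB = b^2 - (\langle\bsB\rangle_\psi - b_0)^2,$$
where $a = \abs{\bsa}$ and $b = \abs{\bsb}$. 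Thus the map from expectations to uncertainties is a simple change of variables.

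First I would apply the identity $\delta(x - \Delta_\psi\bsA) = 2x\,\delta(x^2 - \var_\psi\bsA)$ for $x \geq 0$ (from \eqref{eq:2nddelta}, exactly as in the proof of Theorem~\ref{th:vardis}) to both factors of the defining integrand, and insert the marginalizing identities $1 = \int\dif r\,\delta(r - \langle\bsA\rangle_\psi)$ and $1 = \int\dif s\,\delta(s - \langle\bsB\rangle_\psi)$. Using the quadratic relations above, this gives
$$f^{(2)}_{\Delta\bsA,\Delta\bsB}(x,y) = 4xy \iint \dif r\,\dif s\, \delta\Pa{x^2 - a^2 + (r-a_0)^2}\delta\Pa{y^2 - b^2 + (s-b_0)^2}\, f^{(2)}_{\langle\bsA\rangle,\langle\bsB\rangle}(r,s).$$
Next I would apply the composite-delta rule (the Definition in the paper) to each factor: the equation $(r-a_0)^2 = a^2 - x^2$ has two simple roots $r_\pm(x) = a_0 \pm \sqrt{a^2-x^2}$ on $x \in [0,a)$, with $\abs{\partial_r(\,\cdot\,)} = 2\sqrt{a^2-x^2}$ at each; analogously for $s$. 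Performing the two one-dimensional integrations converts the double integral into a four-term sum
$$f^{(2)}_{\Delta\bsA,\Delta\bsB}(x,y) = \frac{xy}{\sqrt{(a^2-x^2)(b^2-y^2)}}\sum_{i,j\in\{\pm\}} f^{(2)}_{\langle\bsA\rangle,\langle\bsB\rangle}(r_i(x),s_j(y)).$$

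The main obstacle, and the step requiring the most care, is the collapse of this four-term sum into twice the two-term sum claimed in the theorem. This rests on the symmetry $f^{(2)}_{\langle\bsA\rangle,\langle\bsB\rangle}(r_i(x),s_j(y)) = f^{(2)}_{\langle\bsA\rangle,\langle\bsB\rangle}(r_{-i}(x),s_{-j}(y))$, i.e., invariance under the simultaneous reflection $(r,s) \mapsto (2a_0-r, 2b_0-s)$. In case~(i) of Proposition~\ref{prop:expdis} (linearly independent $\{\bsa,\bsb\}$) this is immediate, because $\omega_{\bsA,\bsB}(r,s)$ in \eqref{eq:omegaAB} is built from a quadratic form in $(r-a_0,\,s-b_0)$, which is obviously invariant under simultaneous sign flip. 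In case~(ii) (linearly dependent $\{\bsa,\bsb\}$) the factor $\delta((s-b_0)-\kappa(r-a_0))$ is even in its argument and $f^{(2)}_{\langle\bsA\rangle}$ is supported symmetrically about $a_0$, so the same invariance holds. Consequently the four summands pair up as $\{(++),(--)\}$ and $\{(+-),(-+)\}$, producing $2\sum_{j\in\{\pm\}} f^{(2)}_{\langle\bsA\rangle,\langle\bsB\rangle}(r_+(x),s_j(y))$ and hence the stated formula.
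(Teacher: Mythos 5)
Your proposal is correct and follows essentially the same route as the paper's proof: the identity $\Delta^2_\psi\bsA=a^2-(\langle\bsA\rangle_\psi-a_0)^2$ is just the paper's $(\langle\bsA\rangle_\psi-\lambda_1(\bsA))(\lambda_2(\bsA)-\langle\bsA\rangle_\psi)$ rewritten, and the reduction via $\delta(x-\Delta_\psi\bsA)=2x\,\delta(x^2-\Delta^2_\psi\bsA)$, the composite-delta evaluation at $r_\pm(x),s_\pm(y)$, and the collapse of the four-term sum by the reflection symmetry $(r,s)\mapsto(2a_0-r,2b_0-s)$ all match the paper step for step. Your explicit verification of the symmetry in the linearly dependent case (ii) is a small point of added care that the paper leaves implicit.
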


\begin{proof}
Note that in the proof of Theorem~\ref{th:vardis}, we have already
obtained that
\begin{eqnarray*}
\delta(x^2-\Delta^2_\psi\bsA) =
\delta(x^2-(r-\lambda_1(\bsA))(\lambda_2(\bsA)-r))) =\delta(f_x(r)),
\end{eqnarray*}
where $f_x(r):=x^2-(r-\lambda_1(\bsA))(\lambda_2(\bsA)-r)$.
Similarly,
\begin{eqnarray*}
\delta(y^2-\Delta^2_\psi\bsB) = \delta(g_y(s)),
\end{eqnarray*}
where $g_y(s)=y^2-(s-\lambda_1(\bsB))(\lambda_2(\bsB)-s)$.

Again, by using \eqref{eq:2nddelta}, we get that
\begin{eqnarray*}
f^{(2)}_{\Delta\bsA,\Delta\bsB}(x,y) &=&
4xy\int\delta(x^2-\Delta^2_\psi\bsA)\delta(y^2-\Delta^2_\psi\bsB)\dif\mu(\psi)\\
&=& 4 xy\iint\dif r\dif
sf^{(2)}_{\langle\bsA\rangle,\langle\bsB\rangle}(r,s)\delta(f_x(r))\delta(g_y(s)),
\end{eqnarray*}
where $f^{(2)}_{\langle\bsA\rangle,\langle\bsB\rangle}(r,s)$ is
determined by Proposition~\ref{prop:expdis}. Hence
\begin{eqnarray*}
\delta\Pa{f_x(r)}&=&\frac1{\abs{\partial_{r=r_+(x)}
f_x(r)}}\delta_{r_+(x)}+\frac1{\abs{\partial_{r=r_-(x)} f_x(r)}}\delta_{r_-(x)},\\
\delta\Pa{g_y(s)}&=&\frac1{\abs{\partial_{s=s_+(y)}
g_y(s)}}\delta_{s_+(y)}+\frac1{\abs{\partial_{s=s_-(y)}.
g_y(s)}}\delta_{s_-(y)}.
\end{eqnarray*}
From the above, we have already known that
\begin{eqnarray*}
\delta(f_x(r))\delta(g_y(s)) =
\frac{\delta_{(r_+,s_+)}+\delta_{(r_+,s_-)}+\delta_{(r_-,s_+)}+\delta_{(r_-,s_-)}}{4\sqrt{(a^2-x^2)(b^2-y^2)}}.
\end{eqnarray*}
Based on this observation, we get that
\begin{eqnarray*}
f^{(2)}_{\Delta\bsA,\Delta\bsB}(x,y) =
\frac{xy}{\sqrt{(a^2-x^2)(b^2-y^2)}}\sum_{i,j\in\set{\pm}}f_{\langle\bsA\rangle,\langle\bsB\rangle}(r_i(x),s_j(y)).
\end{eqnarray*}
It is easily checked that $\omega_{\bsA,\bsB}(\cdot,\cdot)$, defined
in \eqref{eq:omegaAB}, satisfies that
\begin{eqnarray*}
\omega_{\bsA,\bsB}(r_+(x),s_+(y))=\omega_{\bsA,\bsB}(r_-(x),s_-(y)),\quad
\omega_{\bsA,\bsB}(r_+(x),s_-(y))=\omega_{\bsA,\bsB}(r_-(x),s_+(y)).
\end{eqnarray*}
These lead to the fact that
\begin{eqnarray*}
\sum_{i,j\in\set{\pm}}f^{(2)}_{\langle\bsA\rangle,\langle\bsB\rangle}(r_i(x),s_j(y))=2\sum_{j\in\set{\pm}}f_{\langle\bsA\rangle,\langle\bsB\rangle}(r_+(x),s_j(y)).
\end{eqnarray*}
Therefore
\begin{eqnarray*}
f^{(2)}_{\Delta\bsA,\Delta\bsB}(x,y) =
\frac{2xy\sum_{j\in\set{\pm}}f^{(2)}_{\langle\bsA\rangle,\langle\bsB\rangle}(r_+(x),s_j(y))}{\sqrt{(a^2-x^2)(b^2-y^2)}}.
\end{eqnarray*}
We get the desired result.
\end{proof}

\subsection{The case for three qubit observables}\label{app:ABC}

We now turn to the case where there are three qubit observables
\begin{eqnarray}\label{ABC}
\bsA =a_0\I+\bsa\cdot\boldsymbol{\sigma}, \quad \bsB
=b_0\I+\bsb\cdot\boldsymbol{\sigma}, \quad \bsC
=c_0\I+\bsc\cdot\boldsymbol{\sigma} \quad
(a_0,\bsa),(b_0,\bsb),(c_0,\bsc)\in \real^4,
\end{eqnarray}
whose uncertainty region
\begin{eqnarray}\label{3DUR}
\cU_{\Delta\bsA,\Delta\bsB,\Delta\bsC}:=\Set{(\Delta_\psi
\bsA,\Delta_\psi \bsB,\Delta_\psi \bsC)\in\real^3_+: \ket{\psi}\in
\complex^2} .
\end{eqnarray}
We define the probability distribution density
\begin{eqnarray*}
f^{(2)}_{\Delta\bsA,\Delta\bsB,\Delta\bsC}(x,y,z) :=
\int\delta(x-\Delta_\psi \bsA)\delta(y-\Delta_\psi
\bsB)\delta(z-\Delta_\psi \bsC)\dif\mu (\psi),
\end{eqnarray*}
on the uncertainty region defined by Eq.~\eqref{3DUR}. Denote
\begin{eqnarray*}
\bsT_{\bsa,\bsb,\bsc}:=\Pa{\begin{array}{ccc}
                        \Inner{\bsa}{\bsa} & \Inner{\bsa}{\bsb} & \Inner{\bsa}{\bsc} \\
                        \Inner{\bsb}{\bsa} & \Inner{\bsb}{\bsb} & \Inner{\bsb}{\bsc} \\
                        \Inner{\bsc}{\bsa} & \Inner{\bsc}{\bsb} & \Inner{\bsc}{\bsc}
                      \end{array}
}.
\end{eqnarray*}
Again note that $\bsT_{\bsa,\bsb,\bsc}$ is also a semidefinite
positive matrix. We find that
$\rank(\bsT_{\bsa,\bsb,\bsc})\leqslant3$. There are three cases that
would be possible: $\rank(\bsT_{\bsa,\bsb,\bsc})=1,2,3$. Thus
$\bsT_{\bsa,\bsb,\bsc}$ is invertible (i.e.,
$\rank(\bsT_{\bsa,\bsb,\bsc})=3$) if and only if
$\set{\bsa,\bsb,\bsc}$ linearly independent. In such case, we write
\begin{eqnarray*}
\omega_{\bsA,\bsB,\bsC}(r,s,t):=\sqrt{(r-a_0,s-b_0,t-c_0)\bsT^{-1}_{\bsa,\bsb,\bsc}(r-a_0,s-b_0,t-c_0)^\t}.
\end{eqnarray*}
In order to calculate $f_{\Delta\bsA,\Delta\bsB,\Delta\bsC}$,
essentially we need to derive the joint probability distribution
density of
$(\langle\bsA\rangle_\psi,\langle\bsB\rangle_\psi,\langle\bsC\rangle_\psi)$,
which is defined by
\begin{eqnarray*}
f^{(2)}_{\langle\bsA\rangle,\langle\bsB\rangle,\langle\bsC\rangle}(r,s,t)
:=\int\delta(r-\langle\bsA\rangle_\psi)\delta(s-\langle\bsB\rangle_\psi)\delta(t-\langle\bsC\rangle_\psi)\dif\mu(\psi).
\end{eqnarray*}

We have the following result:

\begin{prop}\label{prop:ABC}
For three qubit observables, given by Eq.~\eqref{ABC}, (i) if
$\rank(\bsT_{\bsa,\bsb,\bsc})=3$, i.e., $\set{\bsa,\bsb,\bsc}$ is
linearly independent, then the joint probability distribution
density of
$(\langle\bsA\rangle_\psi,\langle\bsB\rangle_\psi,\langle\bsC\rangle_\psi)$,
where $\psi$ is a Haar-distributed random pure state on
$\complex^2$, is given by the following:
\begin{eqnarray}\label{eq:meanABC}
f^{(2)}_{\langle \bsA\rangle ,\langle \bsB\rangle , \langle
\bsC\rangle }(r,s,t)
=\frac1{4\pi\sqrt{\det(\bsT_{\bsa,\bsb,\bsc})}}\delta(1-\omega_{\bsA,\bsB,\bsC}(r,s,t)).
\end{eqnarray}
(ii) If $\rank(\bsT_{\bsa,\bsb,\bsc})=2$, i.e.,
$\set{\bsa,\bsb,\bsc}$ is linearly dependent, without loss of
generality, we assume that $\set{\bsa,\bsb}$ are linearly
independent and $\bsc=\kappa_{\bsa}\cdot\bsa+\kappa_{\bsb}\cdot\bsb$
for some $\kappa_{\bsa}$ and $\kappa_{\bsb}$ with
$\kappa_{\bsa}\kappa_{\bsb}\neq0$, then
\begin{eqnarray*}
f^{(2)}_{\langle\bsA\rangle,\langle\bsB\rangle,\langle\bsC\rangle}(r,s,t)=
\delta((t-c_0)-\kappa_{\bsa}(r-a_0)-\kappa_{\bsb}(s-b_0))f^{(2)}_{\langle\bsA\rangle,\langle\bsB\rangle}(r,s).
\end{eqnarray*}
(iii) If $\rank(\bsT_{\bsa,\bsb,\bsc})=1$, i.e.,
$\set{\bsa,\bsb,\bsc}$ is linearly dependent, without loss of
generality, we assume that $\bsa$ are linearly independent and
$\bsb=\kappa_{\bsb\bsa}\cdot\bsa,\bsc=\kappa_{\bsc\bsa}\cdot\bsa$
for some $\kappa_{\bsb\bsa}$ and $\kappa_{\bsc\bsa}$ with
$\kappa_{\bsb\bsa}\kappa_{\bsc\bsa}\neq0$, then
\begin{eqnarray*}
f^{(2)}_{\langle\bsA\rangle,\langle\bsB\rangle,\langle\bsC\rangle}(r,s,t)
=\delta((s-b_0)-\kappa_{\bsb\bsa}(r-a_0))\delta((t-c_0)-\kappa_{\bsc\bsa}(r-a_0))
f^{(2)}_{\langle\bsA\rangle}(r).
\end{eqnarray*}
\end{prop}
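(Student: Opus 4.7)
The plan is to mirror the Fourier-analytic approach used in the passage preceding Proposition~\ref{prop:expdis} and carry it out in one extra dimension. Writing each of the three delta functions via $\delta(x)=\frac1{2\pi}\int_\real e^{\mathrm{i}x\alpha}\dif\alpha$, the joint density decomposes as
\begin{eqnarray*}
f^{(2)}_{\langle\bsA\rangle,\langle\bsB\rangle,\langle\bsC\rangle}(r,s,t)
=\frac1{(2\pi)^3}\int_{\real^3}e^{\mathrm{i}(r\alpha+s\beta+t\gamma)}\Br{\int e^{-\mathrm{i}\Innerm{\psi}{\alpha\bsA+\beta\bsB+\gamma\bsC}{\psi}}\dif\mu(\psi)}\dif\alpha\dif\beta\dif\gamma.
\end{eqnarray*}
Since $\alpha\bsA+\beta\bsB+\gamma\bsC$ is itself a qubit observable with eigenvalues $(a_0\alpha+b_0\beta+c_0\gamma)\pm|\alpha\bsa+\beta\bsb+\gamma\bsc|$, the $d=2$ case of Proposition~\ref{prop:expectn} evaluates the inner Haar integral (exactly as in the two-observable computation) to $e^{-\mathrm{i}(a_0\alpha+b_0\beta+c_0\gamma)}\sin|\alpha\bsa+\beta\bsb+\gamma\bsc|/|\alpha\bsa+\beta\bsb+\gamma\bsc|$.

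For case (i), with $\bsT_{\bsa,\bsb,\bsc}$ invertible, I change variables via $(\tilde\alpha,\tilde\beta,\tilde\gamma)^{\t}=\bsT^{1/2}_{\bsa,\bsb,\bsc}(\alpha,\beta,\gamma)^{\t}$, whose Jacobian is $\sqrt{\det\bsT_{\bsa,\bsb,\bsc}}$ and which makes the norm isotropic, $|\alpha\bsa+\beta\bsb+\gamma\bsc|^2=\tilde\alpha^2+\tilde\beta^2+\tilde\gamma^2$. Passing to spherical coordinates, the angular integration produces the standard $\sin(|\bsw|t)/(|\bsw|t)$ kernel with $|\bsw|=\omega_{\bsA,\bsB,\bsC}(r,s,t)$, so the remaining radial integral reduces to the distributional identity $\int_0^\infty \sin t\,\sin(\lambda t)\,\dif t = \frac{\pi}{2}\delta(\lambda-1)$ for $\lambda>0$. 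Collecting the prefactors $1/(2\pi)^3$, the Jacobian, the volume factor from spherical coordinates and this radial delta yields exactly \eqref{eq:meanABC}.

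For cases (ii) and (iii) the linear dependence of $\set{\bsa,\bsb,\bsc}$ implies an operator-level identity --- for instance, in case (ii), $\bsC-c_0\I=\kappa_{\bsa}(\bsA-a_0\I)+\kappa_{\bsb}(\bsB-b_0\I)$ --- and I exploit this directly in the Fourier representation. In case (ii) I apply the unit-Jacobian change of variables $\alpha'=\alpha+\kappa_{\bsa}\gamma$, $\beta'=\beta+\kappa_{\bsb}\gamma$, $\gamma'=\gamma$, so that $\alpha\bsa+\beta\bsb+\gamma\bsc=\alpha'\bsa+\beta'\bsb$ and $\gamma'$ appears in the exponent only through the linear form $\mathrm{i}\gamma'\bigl((t-c_0)-\kappa_{\bsa}(r-a_0)-\kappa_{\bsb}(s-b_0)\bigr)$. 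Integrating $\gamma'$ out yields the claimed delta factor, while the surviving $(\alpha',\beta')$ integral is exactly the Fourier representation of $f^{(2)}_{\langle\bsA\rangle,\langle\bsB\rangle}(r,s)$ furnished by Proposition~\ref{prop:expdis}(i). Case (iii) follows by the same recipe with two auxiliary variables absorbed into $\alpha$, producing two delta factors and the single-observable density $f^{(2)}_{\langle\bsA\rangle}(r)$.

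The principal subtlety is expected in case (i): in the two-observable version the analogous radial integral produced the smooth factor $1/\sqrt{1-\omega^2_{\bsA,\bsB}}$ on an open disk, but here I must obtain a singular distribution concentrated on the \emph{boundary} of an ellipsoid. Geometrically this is natural --- pure qubit expectations parameterize the Bloch sphere, a two-dimensional manifold, so three affine functionals of them satisfy one codimension-one constraint --- but analytically it means giving $\int_0^\infty\sin t\sin(\lambda t)\,\dif t$ its tempered-distribution interpretation (e.g.\ via Abel summation or a vanishing-Gaussian regulator) and checking that the apparent $1/|\bsw|$ factor from spherical coordinates is inert because the delta forces $|\bsw|=\omega_{\bsA,\bsB,\bsC}(r,s,t)=1$.
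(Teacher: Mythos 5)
Your proof is correct, but it takes a genuinely different route from the paper's for part (i). The paper argues geometrically: writing $\proj{\psi}=\frac12(\I+\bsu\cdot\boldsymbol{\sigma})$ with $\bsu$ uniform on the Bloch sphere, it observes that $(r-a_0,s-b_0,t-c_0)=\bsQ^{\t}\bsu$ with $\bsQ=(\bsa,\bsb,\bsc)$, so that $\omega_{\bsA,\bsB,\bsC}(r,s,t)=\abs{\bsu}=1$ identically; the density is therefore proportional to $\delta(1-\omega_{\bsA,\bsB,\bsC})$, and the constant is fixed by a separate normalization integral $\int_{\real^3}\delta(1-\omega_{\bsA,\bsB,\bsC})\,\dif r\,\dif s\,\dif t=4\pi\sqrt{\det(\bsT_{\bsa,\bsb,\bsc})}$. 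You instead push the Fourier-analytic machinery of the two-observable case up one dimension, trading the Bessel identity for the distributional identity $\int_0^\infty\sin t\,\sin(\lambda t)\,\dif t=\tfrac{\pi}{2}\delta(\lambda-1)$; your bookkeeping of the prefactors ($\tfrac{1}{(2\pi)^3}\cdot\tfrac{1}{\sqrt{\det\bsT}}\cdot 4\pi\cdot\tfrac{\pi}{2}=\tfrac{1}{4\pi\sqrt{\det\bsT}}$) is right, and your remark that $\delta(\abs{\bsw}-1)/\abs{\bsw}=\delta(\abs{\bsw}-1)$ disposes of the apparent extra factor. Each approach has a virtue: the paper's makes the concentration on the ellipsoid transparent but tacitly relies on the fact that the pushforward of the uniform sphere measure under $\bsu\mapsto\bsQ^{\t}\bsu$ is, up to a constant, the measure $\delta(1-\omega)\,\dif\bsx$ (which holds via the polar decomposition $\bsQ^{\t}=\bsT^{1/2}\bsO^{\t}$), and it needs the separate normalization computation; yours delivers the normalization automatically and treats all three ranks uniformly, at the cost of interpreting a conditionally convergent oscillatory integral as a tempered distribution, a point you correctly flag and regularize. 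For parts (ii) and (iii) the paper is more elementary still --- it simply notes the deterministic operator identity $\bsC-c_0\I=\kappa_{\bsa}(\bsA-a_0\I)+\kappa_{\bsb}(\bsB-b_0\I)$ forces $t-c_0=\kappa_{\bsa}(r-a_0)+\kappa_{\bsb}(s-b_0)$ pointwise --- whereas your unit-Jacobian shear in Fourier space encodes the same constraint; both are valid.
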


\begin{proof}
(i) If $\rank(\bsT_{\bsa,\bsb,\bsc})=3$, then
$\bsT_{\bsa,\bsb,\bsc}$ is \emph{invertible}. By using Bloch
representation,
$\proj{\psi}=\frac12(\I_2+\bsu\cdot\boldsymbol{\sigma})$, where
$\abs{\bsu}=1$. Then for
$(r,s,t)=(\langle\bsA\rangle_\psi,\langle\bsB\rangle_\psi,\langle\bsC\rangle_\psi)=(a_0+\Inner{\bsu}{\bsa},b_0+\Inner{\bsu}{\bsb},\Inner{\bsu}{\bsc})$,
we see that
\begin{eqnarray*}
(r-a_0,s-b_0,t-c_0) =
(\Inner{\bsu}{\bsa},\Inner{\bsu}{\bsb},\Inner{\bsu}{\bsc}).
\end{eqnarray*}
Denote $\bsQ:=(\bsa,\bsb,\bsc)$, which is a $3\times 3$ invertible
real matrix due to the fact that $\set{\bsa,\bsb,\bsc}$ is linearly
independent. Then $\bsT_{\bsa,\bsb,\bsc}=\bsQ^\t\bsQ$ and
$(r-a_0,s-b_0,t-c_0) =\bra{\bsu}\bsQ$, which means that
\begin{eqnarray*}
\omega_{\bsA,\bsB,\bsC}(r,s,t)=\sqrt{\bra{\bsu}\bsQ(\bsQ^\t\bsQ)^{-1}\bsQ^\t\ket{\bsu}}
= \abs{\bsu}=1.
\end{eqnarray*}
This tells us an interesting fact that
$(\langle\bsA\rangle_\psi,\langle\bsB\rangle_\psi,\langle\bsC\rangle_\psi)$
lies at the boundary surface of the ellipsoid
$\omega_{\bsA,\bsB,\bsC}(r,s,t)\leqslant1$, i.e.,
$\omega_{\bsA,\bsB,\bsC}(r,s,t)=1$. This indicates that the PDF of
$(\langle\bsA\rangle_\psi,\langle\bsB\rangle_\psi,\langle\bsC\rangle_\psi)$
satisfies that
\begin{eqnarray*}
f^{(2)}_{\langle\bsA\rangle,\langle\bsB\rangle,\langle\bsC\rangle}(r,s,t)\propto\delta(1-\omega_{\bsA,\bsB,\bsC}(r,s,t)).
\end{eqnarray*}
Next we calculate the following integral:
\begin{eqnarray*}
\int_{\real^3}\delta(1-\omega_{\bsA,\bsB,\bsC}(r,s,t))\dif r\dif
s\dif t=4\pi\sqrt{\det(\bsT_{\bsa,\bsb,\bsc})}.
\end{eqnarray*}
Apparently
\begin{eqnarray*}
\int_{\real^3}\delta(1-\omega_{\bsA,\bsB,\bsC}(r,s,t))\dif r\dif
s\dif t=
\int_{\real^3}\delta\Pa{1-\sqrt{\Innerm{\bsx}{\bsT^{-1}_{\bsa,\bsb,\bsc}}{\bsx}}}[\dif
\bsx].
\end{eqnarray*}
Here $\bsx=(r-a_0,s-b_0,t-c_0)$ and $[\dif\bsx]=\dif r\dif s\dif t$.
Indeed, by using spectral decomposition theorem for Hermitian
matrix, we get that there is orthogonal matrix $\bsO\in\O(3)$ such
that
$\bsT_{\bsa,\bsb,\bsc}=\bsO^\t\diag(\lambda_1,\lambda_2,\lambda_3)\bsO$
where $\lambda_k>0(k=1,2,3)$. Thus
\begin{eqnarray*}
\omega_{\bsA,\bsB,\bsC}(r,s,t)=\Innerm{\bsO\bsx}{\diag(\lambda^{-1}_1,\lambda^{-1}_2,\lambda^{-1}_3)}{\bsO\bsx}=\Innerm{\bsy}{\diag(\lambda^{-1}_1,\lambda^{-1}_2,\lambda^{-1}_3)}{\bsy}
\end{eqnarray*}
where $\bsy=\bsO\bsx$. Thus
\begin{eqnarray*}
\int_{\real^3}\delta(1-\omega_{\bsA,\bsB,\bsC}(r,s,t))\dif r\dif
s\dif t=
\int_{\real^3}\delta\Pa{1-\sqrt{\Innerm{\bsy}{\diag(\lambda^{-1}_1,\lambda^{-1}_2,\lambda^{-1}_3)}{\bsy}}}[\dif
\bsy].
\end{eqnarray*}
Let
$\bsz=\diag(\lambda^{-1/2}_1,\lambda^{-1/2}_2,\lambda^{-1/2}_3)\bsy$.
Then
$[\dif\bsz]=\frac1{\sqrt{\lambda_1\lambda_2\lambda_3}}[\dif\bsy]=\frac1{\sqrt{\det(\bsT_{\bsa,\bsb,\bsc})}}[\dif\bsy]$
and
\begin{eqnarray*}
\int_{\real^3}\delta(1-\omega_{\bsA,\bsB,\bsC}(r,s,t))\dif r\dif
s\dif t=
\sqrt{\det(\bsT_{\bsa,\bsb,\bsc})}\int_{\real^3}\delta\Pa{1-\abs{\bsz}}[\dif
\bsz]=4\pi \sqrt{\det(\bsT_{\bsa,\bsb,\bsc})}.
\end{eqnarray*}
Finally we get that
\begin{eqnarray*}
f^{(2)}_{\langle \bsA\rangle ,\langle \bsB\rangle , \langle
\bsC\rangle }(r,s,t) =
\frac1{4\pi\sqrt{\det(\bsT_{\bsa,\bsb,\bsc})}}\delta(1-\omega_{\bsA,\bsB,\bsC}(r,s,t)).
\end{eqnarray*}

(ii) If $\rank(\bsT_{\bsa,\bsb,\bsc})=2$, then
$\set{\bsa,\bsb,\bsc}$ is linearly dependent. Without loss of
generality, we assume that $\set{\bsa,\bsb}$ is independent. Now
$\bsc=\kappa_{\bsa}\bsa+\kappa_{\bsb}\bsb$ for some
$\kappa_{\bsa},\kappa_{\bsb}\in\real$ with
$\kappa_{\bsa}\kappa_{\bsb}\neq0$. Thus
\begin{eqnarray*}
t-c_0&=&\langle\bsC\rangle_\psi-c_0
=\Innerm{\psi}{\bsc\cdot\boldsymbol{\sigma}}{\psi}
=\kappa_{\bsa}\Innerm{\psi}{\bsa\cdot\boldsymbol{\sigma}}{\psi}+\kappa_{\bsb}\Innerm{\psi}{\bsb\cdot\boldsymbol{\sigma}}{\psi}\\
&&=\kappa_{\bsa}(r-a_0)+\kappa_{\bsb}(s-b_0).
\end{eqnarray*}
Therefore we get that
\begin{eqnarray*}
f^{(2)}_{\langle\bsA\rangle,\langle\bsB\rangle,\langle\bsC\rangle}(r,s,t)=
\delta((t-c_0)-\kappa_{\bsa}(r-a_0)-\kappa_{\bsb}(s-b_0))f^{(2)}_{\langle\bsA\rangle,\langle\bsB\rangle}(r,s).
\end{eqnarray*}

(iii) If $\rank(\bsT_{\bsa,\bsb,\bsc})=1$, then
$\set{\bsa,\bsb,\bsc}$ is linearly dependent. Without loss of
generality, we assume that $\bsa$ are linearly independent and
$\bsb=\kappa_{\bsb\bsa}\cdot\bsa,\bsc=\kappa_{\bsc\bsa}\cdot\bsa$
for some $\kappa_{\bsb\bsa}$ and $\kappa_{\bsc\bsa}$ with
$\kappa_{\bsb\bsa}\kappa_{\bsc\bsa}\neq0$. Then we get the desired
result by mimicking the proof in (ii).
\end{proof}

\begin{thrm}\label{th:ABC2}
The joint probability distribution density of $(\Delta_\psi
\bsA,\Delta_\psi\bsB,\Delta_\psi \bsC)$ for a triple of qubit
observables defined by Eq.~\eqref{ABC}, where $\ket{\psi}$ is a
Haar-distributed random pure state on $\complex^2$, is given by
\begin{eqnarray*}
f^{(2)}_{\Delta\bsA,\Delta\bsB,\Delta\bsC}(x,y,z)
=\frac{2xyz}{\sqrt{(a^2-x^2)(b^2-y^2)(c^2-z^2)}}\sum_{j,k\in\set{\pm}}
f^{(2)}_{\langle \bsA\rangle ,\langle \bsB\rangle,\langle
\bsC\rangle}(r_+(x),s_j(y),t_k(z)),
\end{eqnarray*}
where $f_{\langle \bsA\rangle,\langle \bsB\rangle, \langle
\bsC\rangle}(r,s,t)$ is the joint probability distribution density
of the expectation values $(\langle \bsA\rangle_\psi,\langle
\bsB\rangle_\psi,\langle \bsC\rangle_\psi)$, which is determined by
Eq.~\eqref{eq:meanABC} in Proposition~\ref{prop:ABC}; and
\begin{eqnarray*}
r_\pm(x):=a_0\pm\sqrt{a^2-x^2},\quad
s_\pm(y):=b_0\pm\sqrt{b^2-y^2},\quad t_\pm(z):=c_0\pm\sqrt{c^2-z^2}.
\end{eqnarray*}
\end{thrm}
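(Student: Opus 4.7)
The plan is to mirror the proof of Theorem~\ref{th:varAvarB}, lifted from two to three qubit observables, with Proposition~\ref{prop:ABC} supplying the joint PDF of expectation values.

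First, applying the identity $\delta(u-\Delta_\psi\bsX)=2u\,\delta(u^2-\Delta_\psi^2\bsX)$ (valid for $u\geqslant 0$ since $\Delta_\psi\bsX\geqslant 0$, so the negative root of $u^2-\Delta_\psi^2\bsX$ does not contribute) to each of the three factors in the definition gives
\[
f^{(2)}_{\Delta\bsA,\Delta\bsB,\Delta\bsC}(x,y,z)=8xyz\int\delta(x^2-\Delta_\psi^2\bsA)\,\delta(y^2-\Delta_\psi^2\bsB)\,\delta(z^2-\Delta_\psi^2\bsC)\,\dif\mu(\psi).
\]
Because $\bsA,\bsB,\bsC$ act on $\complex^2$, Cayley--Hamilton yields $\Delta_\psi^2\bsX=(\langle\bsX\rangle_\psi-\lambda_1(\bsX))(\lambda_2(\bsX)-\langle\bsX\rangle_\psi)$ for each $\bsX\in\set{\bsA,\bsB,\bsC}$. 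Inserting the trivial identities $1=\int\dif r\,\delta(r-\langle\bsA\rangle_\psi)$ and their analogues for $\bsB,\bsC$, and recognising the definition of $f^{(2)}_{\langle\bsA\rangle,\langle\bsB\rangle,\langle\bsC\rangle}$ after integrating over $\psi$, I obtain
\[
f^{(2)}_{\Delta\bsA,\Delta\bsB,\Delta\bsC}(x,y,z)=8xyz\iiint\dif r\dif s\dif t\, f^{(2)}_{\langle\bsA\rangle,\langle\bsB\rangle,\langle\bsC\rangle}(r,s,t)\,\delta(f_x(r))\,\delta(g_y(s))\,\delta(h_z(t)),
\]
with $f_x(r)=x^2-(r-\lambda_1(\bsA))(\lambda_2(\bsA)-r)$ and $g_y,h_z$ defined analogously.

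Next I would apply the $\delta\circ g$ formula from the preliminaries to each factor. The polynomial $f_x$ has simple roots $r_\pm(x)=a_0\pm\sqrt{a^2-x^2}$ with $|f_x'(r_\pm(x))|=2|r_\pm(x)-a_0|=2\sqrt{a^2-x^2}$, and similarly for $g_y,h_z$. Consequently
\[
\delta(f_x(r))\,\delta(g_y(s))\,\delta(h_z(t))=\frac{1}{8\sqrt{(a^2-x^2)(b^2-y^2)(c^2-z^2)}}\sum_{\epsilon_1,\epsilon_2,\epsilon_3\in\set{\pm}}\delta_{r_{\epsilon_1}(x)}\,\delta_{s_{\epsilon_2}(y)}\,\delta_{t_{\epsilon_3}(z)}.
\]
Carrying out the $(r,s,t)$ integrations and collecting the $8xyz$ prefactor leaves a sum of the eight evaluations of $f^{(2)}_{\langle\bsA\rangle,\langle\bsB\rangle,\langle\bsC\rangle}$ at the sign combinations, divided by $\sqrt{(a^2-x^2)(b^2-y^2)(c^2-z^2)}$.

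The final step halves this sum via the inversion symmetry
\[
f^{(2)}_{\langle\bsA\rangle,\langle\bsB\rangle,\langle\bsC\rangle}(r,s,t)=f^{(2)}_{\langle\bsA\rangle,\langle\bsB\rangle,\langle\bsC\rangle}(2a_0-r,2b_0-s,2c_0-t),
\]
i.e. reflection through the centre $(a_0,b_0,c_0)$. Because $r_{-\epsilon}(x)=2a_0-r_\epsilon(x)$ and analogously for $s,t$, this pairs $(\epsilon_1,\epsilon_2,\epsilon_3)$ with $(-\epsilon_1,-\epsilon_2,-\epsilon_3)$, collapsing the eight terms into $2\sum_{j,k\in\set{\pm}}f^{(2)}_{\langle\bsA\rangle,\langle\bsB\rangle,\langle\bsC\rangle}(r_+(x),s_j(y),t_k(z))$, which is exactly the claim. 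The invariance is immediate in the rank-$3$ case of Proposition~\ref{prop:ABC} because $\omega_{\bsA,\bsB,\bsC}$ is a quadratic form in the centred variables $(r-a_0,s-b_0,t-c_0)$; in the rank-$2$ and rank-$1$ cases the additional linear $\delta$-constraints are themselves invariant under centre-reflection, and the remaining two- or one-variable expectation PDF inherits the symmetry from Proposition~\ref{prop:expdis}. The main delicate point I expect is precisely this book-keeping in the degenerate rank cases, where $f^{(2)}_{\langle\bsA\rangle,\langle\bsB\rangle,\langle\bsC\rangle}$ is a singular distribution supported on a plane or a line; once the centre-reflection invariance is recorded there, the pairing argument goes through unchanged and yields the stated formula.
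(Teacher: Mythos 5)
Your proposal is correct and follows essentially the same route as the paper's proof: converting each $\delta(u-\Delta_\psi\bsX)$ to $2u\,\delta(u^2-\Delta_\psi^2\bsX)$, inserting the joint expectation-value PDF, resolving the quadratic deltas into the eight root combinations with the factor $8\sqrt{(a^2-x^2)(b^2-y^2)(c^2-z^2)}$, and halving by symmetry. Your formulation of that last step as a single centre-reflection invariance $(r,s,t)\mapsto(2a_0-r,2b_0-s,2c_0-t)$, checked in all three rank cases of Proposition~\ref{prop:ABC}, is just a cleaner packaging of the four pairwise equalities the paper lists without proof.
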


\begin{proof}
Note that
\begin{eqnarray*}
f^{(2)}_{\Delta\bsA,\Delta\bsB,\Delta\bsC}(x,y,z)&=&\int\delta(x-\Delta_\psi
\bsA)\delta(y-\Delta_\psi
\bsB)\delta(z-\Delta_\psi\bsC)\dif\mu(\psi) \notag\\
&=& 8xyz\int\delta\Pa{x^2-\Delta^2_\psi
\bsA}\cdot\delta\Pa{y^2-\Delta^2_\psi
\bsB}\cdot\delta\Pa{z^2-\Delta^2_\psi\bsC}\dif\mu(\psi).
\end{eqnarray*}
Again using the method in the proof of Theorem 1, we have already
obtained that
\begin{eqnarray*}
\delta\Pa{x^2-\Delta^2_\psi \bsA}\cdot\delta\Pa{y^2-\Delta^2_\psi
\bsB}\cdot\delta\Pa{z^2-\Delta^2_\psi
\bsC}=\delta(f_x(r))\delta(g_y(s))\delta(h_z(t)),
\end{eqnarray*}
where
\begin{eqnarray*}
f_x(r)&:=&x^2-(r-\lambda_1(\bsA))(\lambda_2(\bsA)-r),\\
g_y(s)&:=&y^2-(s-\lambda_1(\bsB))(\lambda_2(\bsB)-s),\\
h_z(t)&:=&z^2-(t-\lambda_1(\bsC))(\lambda_2(\bsC)-t).
\end{eqnarray*}
Then
\begin{eqnarray*}
f^{(2)}_{\Delta\bsA,\Delta\bsB,\Delta\bsC}(x,y,z)&=&8xyz\iiint
\delta(f_x(r))\delta(g_y(s))\delta(h_z(t))f^{(2)}_{\langle\bsA\rangle,\langle\bsB\rangle,\langle\bsC\rangle}(r,s,t)\dif
r\dif s\dif t.
\end{eqnarray*}
Furthermore we have
\begin{eqnarray*}
\delta(f_x(r))\delta(g_y(s))\delta(h_z(t)) =
\frac{\sum_{i,j,k\in\set{\pm}}\delta_{(r_i(x),s_j(y),t_k(z))}}{8\sqrt{(a^2-x^2)(b^2-y^2)(c^2-z^2)}}.
\end{eqnarray*}
Based on this observation, we get that
\begin{eqnarray*}
f^{(2)}_{\Delta\bsA,\Delta\bsB,\Delta\bsC}(x,y,z)&=&\frac{xyz}{\sqrt{(a^2-x^2)(b^2-y^2)(c^2-z^2)}}\sum_{i,j,k\in\set{\pm}}\Inner{\delta_{(r_i(x),s_j(y),t_k(z))}}{f^{(2)}_{\langle\bsA\rangle,\langle\bsB\rangle,\langle\bsC\rangle}}.
\end{eqnarray*}
Thus
\begin{eqnarray*}
f^{(2)}_{\Delta\bsA,\Delta\bsB,\Delta\bsC}(x,y,z) =
\frac{xyz\sum_{i,j,k\in\set{\pm}}f^{(2)}_{\langle\bsA\rangle,\langle\bsB\rangle,\langle\bsC\rangle}(r_i(x),s_j(y),t_k(z))}{\sqrt{(a^2-x^2)(b^2-y^2)(c^2-z^2)}}.
\end{eqnarray*}
It is easily seen that
\begin{eqnarray*}
f^{(2)}_{\langle\bsA\rangle,\langle\bsB\rangle,\langle\bsC\rangle}(r_+(x),s_+(y),t_+(z))=f^{(2)}_{\langle\bsA\rangle,\langle\bsB\rangle,\langle\bsC\rangle}(r_-(x),s_-(y),t_-(z)),\\
f^{(2)}_{\langle\bsA\rangle,\langle\bsB\rangle,\langle\bsC\rangle}(r_+(x),s_+(y),t_-(z))=f^{(2)}_{\langle\bsA\rangle,\langle\bsB\rangle,\langle\bsC\rangle}(r_-(x),s_-(y),t_+(z)),\\
f^{(2)}_{\langle\bsA\rangle,\langle\bsB\rangle,\langle\bsC\rangle}(r_+(x),s_-(y),t_+(z))=f^{(2)}_{\langle\bsA\rangle,\langle\bsB\rangle,\langle\bsC\rangle}(r_-(x),s_+(y),t_-(z)),\\
f^{(2)}_{\langle\bsA\rangle,\langle\bsB\rangle,\langle\bsC\rangle}(r_+(x),s_-(y),t_-(z))=f^{(2)}_{\langle\bsA\rangle,\langle\bsB\rangle,\langle\bsC\rangle}(r_-(x),s_+(y),t_+(z)).
\end{eqnarray*}
From these observations, we can reduce the above expression to the
following:
\begin{eqnarray*}
f^{(2)}_{\Delta\bsA,\Delta\bsB,\Delta\bsC}(x,y,z) =
\frac{2xyz\sum_{j,k\in\set{\pm}}f^{(2)}_{\langle\bsA\rangle,\langle\bsB\rangle,\langle\bsC\rangle}(r_+(x),s_j(y),t_k(z))}{\sqrt{(a^2-x^2)(b^2-y^2)(c^2-z^2)}}.
\end{eqnarray*}
The desired result is obtained.
\end{proof}

Note that the PDFs of uncertainties of multiple qubit observables
(more than three) will be reduced into the three situations above,
as shown in \cite{Zhang2021preprint}. Here we will omit it here.

\section{PDF of uncertainty of a single qudit observable}

Assume $\bsA$ is a non-degenerate positive matrix with eigenvalues
$\lambda_k(\bsA)=a_k(k=1,\ldots,d)$ with $a_d>\cdots>a_1$. Denote by
$V_d(\bsa)=\prod_{1\leqslant i<j\leqslant d}(a_j-a_i)$. Due to the
following relation
$(\Delta_\psi\bsA)^2=\Innerm{\psi}{\bsA^2}{\psi}-\Innerm{\psi}{\bsA}{\psi}^2$,
i.e., the variance of $\bsA$ is the function of
$r=\Innerm{\psi}{\bsA}{\psi}$ and $s=\Innerm{\psi}{\bsA^2}{\psi}$,
where $\ket{\psi}$ is a Haar-distributed pure state. Thus firstly we
derive the joint PDF of
$(\Innerm{\psi}{\bsA}{\psi},\Innerm{\psi}{\bsA^2}{\psi})$, defined
by
\begin{eqnarray*}
f^{(d)}_{\langle\bsA\rangle,\langle\bsA^2\rangle}(r,s) := \int
\delta(r-\Innerm{\psi}{\bsA}{\psi})\delta(s-\Innerm{\psi}{\bsA^2}{\psi})\dif\mu(\psi).
\end{eqnarray*}
By performing Laplace transformation $(r,s)\to(\alpha,\beta)$ to
$f^{(d)}_{\langle\bsA\rangle,\langle\bsA^2\rangle}(r,s)$, we get
that
\begin{eqnarray*}
\sL\Pa{f^{(d)}_{\langle\bsA\rangle,\langle\bsA^2\rangle}}(\alpha,\beta)
=\int
\exp\Pa{-\Innerm{\psi}{\alpha\bsA+\beta\bsA^2}{\psi}}\dif\mu(\psi)=
\int
\exp\Pa{-z}f^{(d)}_{\langle\alpha\bsA+\beta\bsA^2\rangle}(z)\dif z,
\end{eqnarray*}
where $f^{(d)}_{\langle\alpha\bsA+\beta\bsA^2\rangle}(z)$ is
determined by Proposition~\ref{prop:expectn}:
\begin{eqnarray*}
f^{(d)}_{\langle\alpha\bsA+\beta\bsA^2\rangle}(z)=(-1)^{d-1}(d-1)\sum^d_{i=1}\frac{(z-(\alpha
a_i+\beta a^2_i))^{d-2}}{\prod_{j\in\hat i}(a_i-a_j)(\alpha+\beta
(a_i+a_j))}H(z-(\alpha a_i+\beta a^2_i)).
\end{eqnarray*}
Theoretically, we can calculate the above integral for any finite
natural number $d$, but instead, we will focus on the case where
$d=3,4$, we use \textit{Mathematica} to do this tedious job. By
simplifying the results obtained via the Laplace
transformation/inverse Laplace transformation in
\textit{Mathematica}, we get the following results without details.

\begin{thrm}\label{th:qu3it}
For a given qutrit observable $\bsA$, acting on $\complex^3$, with
their eigenvalues $a_1<a_2<a_3$, the joint pdf of
$(\langle\bsA\rangle_\psi,\langle\bsA^2\rangle_\psi)$, where
$\ket{\psi}\in\mC^3$, is given by
\begin{eqnarray*}
f^{(3)}_{\langle\bsA\rangle,\langle\bsA^2\rangle}(r,s)
=\frac{\Gamma(3)}{V_3(\bsa)},
\end{eqnarray*}
on $D=D_1\cup D_2$ for
\begin{eqnarray*}
D_1&=&\Set{(r,s): a_1\leqslant r\leqslant a_2,
(a_1+a_2)r-a_1a_2\leqslant s\leqslant(a_1+a_3)r-a_1a_3},\\
D_2&=&\Set{(r,s): a_2\leqslant r\leqslant a_3,
(a_2+a_3)r-a_2a_3\leqslant s\leqslant(a_1+a_3)r-a_1a_3};
\end{eqnarray*}
and $f^{(3)}_{\langle\bsA\rangle,\langle\bsA^2\rangle}(r,s)=0$
otherwise. Thus
\begin{eqnarray*}
f^{(3)}_{\langle\bsA\rangle,\Delta\bsA}(r,x) =
\frac{2{\Gamma(3)}}{V_3(\bsa)}x\quad(\forall (r,x)\in R),
\end{eqnarray*}
where $R=R_1\cup R_2$ with
\begin{eqnarray*}
R_1&=&\Set{(r,x): a_1\leqslant r\leqslant a_2,
\sqrt{(a_2-r)(r-a_1)}\leqslant x\leqslant
\sqrt{(a_3-r)(r-a_1)}},\\
R_2&=&\Set{(r,x): a_2\leqslant r\leqslant a_3,
\sqrt{(a_3-r)(r-a_2)}\leqslant x\leqslant \sqrt{(a_3-r)(r-a_1)}},
\end{eqnarray*}
and $f^{(3)}_{\langle\bsA\rangle,\Delta\bsA}(r,x) =0$ otherwise.
Moreover, we get that
\begin{eqnarray*}
f^{(3)}_{\Delta\bsA}(x) =
\frac{4\Gamma(3)}{V_3(\bsa)}x\Pa{\chi_{\Br{0,\tfrac{a_3-a_1}2}}(x)\varepsilon_{31}(x)-\chi_{\Br{0,\tfrac{a_3-a_2}2}}(x)\varepsilon_{32}(x)-\chi_{\Br{0,\tfrac{a_2-a_1}2}}(x)\varepsilon_{21}(x)},
\end{eqnarray*}
where $x\in\Br{0,\frac{a_3-a_1}2}$ and $\chi_S(x)$ is the indicator
of the set $S$, i.e., $\chi_S(x)=1$ if $x\in S$, and $\chi_S(x)=0$
if $x\notin S$;
\begin{eqnarray}\label{eq:notationdelta}
\varepsilon_{ij}(x):=\sqrt{\Pa{\frac{a_i-a_j}2}^2- x^2}.
\end{eqnarray}
\end{thrm}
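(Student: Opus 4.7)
The plan is to bypass the Laplace/inverse-Laplace machinery and instead exploit the following structural fact: if $\ket{\psi}\in\complex^3$ is Haar-distributed and $\bsA=\sum_{k=1}^3 a_k\proj{a_k}$ is the spectral decomposition, then the vector $\bsp=(p_1,p_2,p_3)$ of diagonal probabilities $p_k=|\iinner{a_k}{\psi}|^2$ is uniformly distributed on the $2$-simplex $\Delta_2$ with constant density $\Gamma(3)=2$ (the standard flat-Dirichlet fact, which can be read off directly from the form of $\dif\mu(\psi)$ after integrating out the phases). Since $\langle\bsA\rangle_\psi=\sum_k a_k p_k=:r$ and $\langle\bsA^2\rangle_\psi=\sum_k a_k^2 p_k=:s$ are linear functionals of $\bsp$, the joint PDF of $(r,s)$ reduces to a linear change of variables on the simplex.

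Concretely, after eliminating $p_3=1-p_1-p_2$, the map $(p_1,p_2)\mapsto(r,s)$ is affine with Jacobian determinant $(a_1-a_3)(a_2-a_3)(a_2-a_1)$, whose absolute value equals $V_3(\bsa)$. Hence $f^{(3)}_{\langle\bsA\rangle,\langle\bsA^2\rangle}(r,s)=\Gamma(3)/V_3(\bsa)$ on the image of $\Delta_2$, which is the convex hull of the three points $\{(a_k,a_k^2)\}_{k=1}^3$ lying on the parabola $s=r^2$. Geometrically this is the triangle bounded above by the secant joining $(a_1,a_1^2)$ and $(a_3,a_3^2)$ (the line $s=(a_1+a_3)r-a_1a_3$) and below by the two chords meeting at $(a_2,a_2^2)$; splitting the triangle by the vertical line $r=a_2$ yields exactly $D=D_1\cup D_2$.

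The second formula then follows from the further change of variable $s=r^2+x^2$ with $x\geqslant0$, whose Jacobian is $|\partial s/\partial x|=2x$, producing $f^{(3)}_{\langle\bsA\rangle,\Delta\bsA}(r,x)=(2\Gamma(3)/V_3(\bsa))\,x$; the inequalities defining $D_j$ rewrite into the square-root constraints defining $R_j$ by completing squares, e.g.\ $(a_1+a_2)r-a_1a_2\leqslant r^2+x^2$ becomes $x^2\geqslant(r-a_1)(a_2-r)$. The marginal $f^{(3)}_{\Delta\bsA}(x)$ is then obtained by integrating $(2\Gamma(3)/V_3(\bsa))\,x$ along the horizontal slice of $R$ at height $x$: the upper bound $x\leqslant\sqrt{(r-a_1)(a_3-r)}$ contributes an $r$-interval of length $2\varepsilon_{31}(x)$ centered at $(a_1+a_3)/2$, and the two lower bounds delete sub-intervals of lengths $2\varepsilon_{21}(x)$ and $2\varepsilon_{32}(x)$ centered at $(a_1+a_2)/2$ and $(a_2+a_3)/2$, which appear only when $x\leqslant(a_2-a_1)/2$ and $x\leqslant(a_3-a_2)/2$ respectively. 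Multiplying by the prefactor reproduces the three-term formula with indicators.

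The main obstacle is the geometric bookkeeping in this last step: one must verify both that each deleted sub-interval lies entirely inside the upper-bound interval and that the two deleted sub-intervals are pairwise disjoint, so that lengths simply subtract. Both claims reduce to the elementary inequality $\sqrt{u^2-x^2}\leqslant u$: containment follows from $\varepsilon_{31}(x)-\varepsilon_{21}(x)\geqslant(a_3-a_2)/2$, itself a consequence of the fact that $u\mapsto u-\sqrt{u^2-x^2}$ is decreasing on $[x,\infty)$; disjointness follows from $\varepsilon_{21}(x)+\varepsilon_{32}(x)\leqslant(a_2-a_1)/2+(a_3-a_2)/2=(a_3-a_1)/2$. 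Although elementary, these checks constitute the delicate point of the argument.
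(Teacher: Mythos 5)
Your proposal is correct, and it takes a genuinely different route from the paper. The paper sets up the Laplace transform $\sL(f^{(d)}_{\langle\bsA\rangle,\langle\bsA^2\rangle})(\alpha,\beta)=\int e^{-z}f^{(d)}_{\langle\alpha\bsA+\beta\bsA^2\rangle}(z)\,\dif z$ and then explicitly states that the $d=3,4$ cases are obtained by carrying out the inverse Laplace transform in \textit{Mathematica}, ``without details.'' You instead start from the fact that the diagonal probabilities $p_k=\abs{\iinner{a_k}{\psi}}^2$ of a Haar-random pure state are uniform on $\Delta_2$ with density $\Gamma(3)$ --- which is exactly the fact already encoded in the paper's integral representation $f^{(d)}_{\langle\bsA\rangle}(r)=\Gamma(d)\int\delta(r-\sum_i\lambda_i r_i)\delta(1-\sum_i r_i)\prod_i\dif r_i$ --- and push it forward through the affine map $(p_1,p_2)\mapsto(r,s)$, whose Jacobian you correctly identify as $\pm V_3(\bsa)$. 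This immediately gives the constant density on the triangle $\mathrm{conv}\{(a_k,a_k^2)\}$, whose chords $s=\varphi_{i,j}(r)$ reproduce $D_1\cup D_2$, and the remaining steps (the $s=r^2+x^2$ substitution with Jacobian $2x$, and the slice-length computation with the two containment/disjointness checks via monotonicity of $u\mapsto u-\sqrt{u^2-x^2}$) are all sound; I verified the three-term indicator formula matches. What your approach buys is a complete, elementary, human-checkable derivation where the paper offers only a computer-assisted one; what it gives up is the uniform Laplace-transform framework that the paper reuses verbatim for $d=4$ and, in principle, for general $d$, although your simplex argument also generalizes (the push-forward density just ceases to be piecewise constant for $d\geqslant4$).
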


\begin{thrm}
For a given qudit observable $\bsA$, acting on $\complex^4$, with
their eigenvalues $a_1<a_2<a_3<a_4$, the joint pdf of
$(\langle\bsA\rangle_\psi,\langle\bsA^2\rangle_\psi)$, where
$\ket{\psi}\in\mC^4$, is given by
\begin{eqnarray*}
f^{(4)}_{\langle\bsA\rangle,\langle\bsA^2\rangle}(r,s)
=\frac{\Gamma(4)}{V_4(\bsa)}g(r,s),
\end{eqnarray*}
where $g(r,s)$ is defined by the following way: via
$\varphi_{i,j}(r):=(a_i+a_j)r-a_ia_j$,
\begin{eqnarray*}
g(r,s) =
\begin{cases}
(a_4-a_3)(s-\varphi_{1,2}(r)),&\text{if }(r,s)\in D^{(1)}_1\cup D^{(2)}_2;\\
(a_4-a_1)(s-\varphi_{2,3}(r)),&\text{if }(r,s)\in D^{(2)}_1\cup D^{(2)}_4;\\
(a_2-a_1)(s-\varphi_{3,4}(r)),&\text{if }(r,s)\in D^{(2)}_5\cup D^{(3)}_1;\\
(a_2-a_3)(s-\varphi_{1,4}(r)),&\text{if }(r,s)\in D^{(1)}_2\cup D^{(2)}_3\cup D^{(2)}_6\cup D^{(3)}_2;\\
\end{cases}
\end{eqnarray*}
on $D=\bigcup^2_{i=1}D^{(1)}_i\cup \bigcup^6_{j=1} D^{(2)}_j\cup
\bigcup^2_{k=1}D^{(3)}_k$ for
\begin{eqnarray*}
D^{(1)}_1&=&\Set{(r,s): a_1\leqslant r\leqslant a_2,
\varphi_{1,2}(r)\leqslant s\leqslant\varphi_{1,3}(r)},\\
D^{(1)}_2&=&\Set{(r,s): a_1\leqslant r\leqslant a_2,
\varphi_{1,3}(r)\leqslant s\leqslant\varphi_{1,4}(r)},
\end{eqnarray*}
and
\begin{eqnarray*}
D^{(2)}_1&=& \Set{(r,s): a_2\leqslant r\leqslant
\frac{a_2a_4-a_1a_3}{a_2+a_4-a_1-a_3},
\varphi_{2,3}(r)\leqslant s\leqslant\varphi_{2,4}(r)},\\
D^{(2)}_2&=& \Set{(r,s): a_2\leqslant r\leqslant
\frac{a_2a_4-a_1a_3}{a_2+a_4-a_1-a_3},
\varphi_{2,4}(r)\leqslant s\leqslant\varphi_{1,3}(r)},\\
D^{(2)}_3&=& \Set{(r,s): a_2\leqslant r\leqslant
\frac{a_2a_4-a_1a_3}{a_2+a_4-a_1-a_3},
\varphi_{1,3}(r)\leqslant s\leqslant\varphi_{1,4}(r)},\\
D^{(2)}_4&=& \Set{(r,s):
\frac{a_2a_4-a_1a_3}{a_2+a_4-a_1-a_3}\leqslant r\leqslant a_3,
\varphi_{2,3}(r)\leqslant s\leqslant\varphi_{1,3}(r)},\\
D^{(2)}_5&=& \Set{(r,s):
\frac{a_2a_4-a_1a_3}{a_2+a_4-a_1-a_3}\leqslant r\leqslant a_3,
\varphi_{1,3}(r)\leqslant s\leqslant\varphi_{2,4}(r)},\\
D^{(2)}_6&=& \Set{(r,s):
\frac{a_2a_4-a_1a_3}{a_2+a_4-a_1-a_3}\leqslant r\leqslant a_3,
\varphi_{2,4}(r)\leqslant s\leqslant\varphi_{1,4}(r)},
\end{eqnarray*}
and
\begin{eqnarray*}
D^{(3)}_1&=&\Set{(r,s): a_3\leqslant r\leqslant a_4,
\varphi_{3,4}(r)\leqslant s\leqslant\varphi_{2,4}(r)},\\
D^{(3)}_2&=&\Set{(r,s): a_3\leqslant r\leqslant a_4,
\varphi_{2,4}(r)\leqslant s\leqslant\varphi_{1,4}(r)},
\end{eqnarray*}
and $f^{(4)}_{\langle\bsA\rangle,\langle\bsA^2\rangle}(r,s)=0$
otherwise.
\end{thrm}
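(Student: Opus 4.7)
The plan is to extend the Laplace-transform technique outlined before Theorem~\ref{th:qu3it} from $d=3$ to $d=4$. First, I would compute the double Laplace transform $(r,s)\to(\alpha,\beta)$ of $f^{(4)}_{\langle\bsA\rangle,\langle\bsA^2\rangle}(r,s)$. Since $\Innerm{\psi}{\alpha\bsA+\beta\bsA^2}{\psi}=\alpha\langle\bsA\rangle_\psi+\beta\langle\bsA^2\rangle_\psi$, this transform equals
\begin{eqnarray*}
\sL\Pa{f^{(4)}_{\langle\bsA\rangle,\langle\bsA^2\rangle}}(\alpha,\beta)=\int e^{-z}\,f^{(4)}_{\langle\alpha\bsA+\beta\bsA^2\rangle}(z)\,\dif z,
\end{eqnarray*}
into which I would substitute the explicit form of $f^{(4)}_{\langle\alpha\bsA+\beta\bsA^2\rangle}(z)$ supplied by Proposition~\ref{prop:expectn}. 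For $d=4$ this is a sum of four terms proportional to $(z-\alpha a_i-\beta a_i^2)^{2}\,H(z-\alpha a_i-\beta a_i^2)$ divided by $\prod_{j\in\hat i}(a_i-a_j)\bigl[\alpha+\beta(a_i+a_j)\bigr]$. Carrying out the $z$-integration collapses each term to $e^{-\alpha a_i-\beta a_i^2}$ over a cubic polynomial in $(\alpha,\beta)$, and inverting this expression in $(\alpha,\beta)$ symbolically --- as already done via \textit{Mathematica} for the qutrit case --- produces the stated formula for $f^{(4)}_{\langle\bsA\rangle,\langle\bsA^2\rangle}(r,s)$.

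The second step is to read off the geometry of the support and the piecewise structure of $g(r,s)$. The singularities of the Laplace-domain expression at $\alpha+\beta(a_i+a_j)=0$ transform under inversion into boundary curves $s=\varphi_{i,j}(r)$ with $\varphi_{i,j}(r)=(a_i+a_j)r-a_ia_j$, i.e.\ the chord through the points $(a_i,a_i^2)$ and $(a_j,a_j^2)$ on the moment parabola $s=r^2$. Hence the support of $f^{(4)}_{\langle\bsA\rangle,\langle\bsA^2\rangle}$ is the convex hull of $\set{(a_i,a_i^2):i=1,\ldots,4}$, bounded below by the polyline $\varphi_{1,2}\cup\varphi_{2,3}\cup\varphi_{3,4}$ and above by $\varphi_{1,4}$. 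The two interior chords $\varphi_{1,3}$ and $\varphi_{2,4}$ cross at
\begin{eqnarray*}
r=\frac{a_2a_4-a_1a_3}{a_2+a_4-a_1-a_3},
\end{eqnarray*}
which is exactly the transitional abscissa appearing in the definitions of $D^{(2)}_k$. Together these five curves partition the support into the ten sub-regions listed in the theorem.

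The main obstacle is pure bookkeeping. On each sub-region I must (i) identify which of the six Heaviside-type terms $H(s-\varphi_{i,j}(r))$ contribute, and (ii) verify that the corresponding linear combination of partial-fraction contributions from the four $H(z-\alpha a_i-\beta a_i^2)$-pieces collapses, after substantial cancellation, to a single expression of the form $(a_k-a_\ell)(s-\varphi_{i,j}(r))$. Two robust independent checks are available: continuity of $g(r,s)$ across every shared edge between adjacent sub-regions, which must hold because $f^{(4)}_{\langle\bsA\rangle,\langle\bsA^2\rangle}(r,s)$ is the pushforward of the uniform Dirichlet$(1,1,1,1)$ density on $\Delta_3$ under the smooth map $\bsp\mapsto\pa{\sum_ia_ip_i,\sum_ia_i^2p_i}$; and the normalization $\iint_D g(r,s)\,\dif r\,\dif s=V_4(\bsa)/\Gamma(4)$, which follows from $\int f^{(4)}_{\langle\bsA\rangle,\langle\bsA^2\rangle}(r,s)\,\dif r\,\dif s=1$.
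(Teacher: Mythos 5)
Your proposal follows essentially the same route as the paper, which states only that the $d=3,4$ results are obtained by the Laplace-transform/inverse-Laplace-transform method of the preceding section carried out symbolically in \textit{Mathematica}, ``without details.'' Your outline correctly reproduces that computation (the collapse of each term to $e^{-\alpha a_i-\beta a_i^2}$ over a cubic in $(\alpha,\beta)$, the identification of the lines $s=\varphi_{i,j}(r)$ as chords of the moment parabola through $(a_i,a_i^2)$ and $(a_j,a_j^2)$, and the crossing of $\varphi_{1,3}$ with $\varphi_{2,4}$ at the stated abscissa), and in fact supplies more justification than the paper does, via the continuity and normalization checks.
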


\begin{remark}
Denote by $D_1=\bigcup^2_{i=1}D^{(1)}_i,
D_2=\bigcup^6_{i=1}D^{(2)}_j$, and $D_3=\bigcup^2_{k=1}D^{(3)}_k$,
respectively. Thus
\begin{eqnarray*}
D_1 &=& \Set{(r,s): a_1\leqslant r\leqslant a_2,
\varphi_{1,2}(r)\leqslant s\leqslant\varphi_{1,4}(r)},\\
D_2 &=& \Set{(r,s): a_2\leqslant r\leqslant a_3,
\varphi_{2,3}(r)\leqslant s\leqslant\varphi_{1,4}(r)},\\
D_3 &=& \Set{(r,s): a_3\leqslant r\leqslant a_4,
\varphi_{3,4}(r)\leqslant s\leqslant\varphi_{1,4}(r)}.
\end{eqnarray*}
This implies that the support of
$f^{(4)}_{\langle\bsA\rangle,\langle\bsA^2\rangle}$ is just $D_1\cup
D_2\cup D_3$, i.e.,
$\supp(f^{(4)}_{\langle\bsA\rangle,\langle\bsA^2\rangle})=D_1\cup
D_2\cup D_3$.
\end{remark}

\begin{cor}\label{cor:DeltaA4}
For a given qudit observable $\bsA$, acting on $\complex^4$, with
their eigenvalues $a_1<a_2<a_3<a_4$, the joint pdf of
$(\langle\bsA\rangle_\psi,\langle\bsA^2\rangle_\psi)$, where
$\ket{\psi}\in\mC^4$, is given by
\begin{eqnarray*}
f^{(4)}_{\langle\bsA\rangle,\Delta\bsA}(r,x)
=\frac{2\Gamma(4)}{V_4(\bsa)}xg(r,x^2+r^2).
\end{eqnarray*}
Moreover, via $\varphi_{i,j}(r):=(a_i+a_j)r-a_ia_j$,
\begin{eqnarray*}
f^{(4)}_{\langle\bsA\rangle,\Delta\bsA}(r,x)
=\frac{2\Gamma(4)}{V_4(\bsa)}x
\begin{cases}
(a_4-a_3)(x^2+r^2-\varphi_{1,2}(r)),&\text{if }(r,s)\in R^{(1)}_1\cup R^{(2)}_2;\\
(a_4-a_1)(x^2+r^2-\varphi_{2,3}(r)),&\text{if }(r,s)\in R^{(2)}_1\cup R^{(2)}_4;\\
(a_2-a_1)(x^2+r^2-\varphi_{3,4}(r)),&\text{if }(r,s)\in R^{(2)}_5\cup R^{(3)}_1;\\
(a_2-a_3)(x^2+r^2-\varphi_{1,4}(r)),&\text{if }(r,s)\in R^{(1)}_2\cup R^{(2)}_3\cup R^{(2)}_6\cup R^{(3)}_2;\\
\end{cases}
\end{eqnarray*}
on $R=\bigcup^2_{i=1}R^{(1)}_i\cup \bigcup^6_{j=1} R^{(2)}_j\cup
\bigcup^2_{k=1}R^{(3)}_k$ for
\begin{eqnarray*}
R^{(1)}_1&=&\Set{(r,x): a_1\leqslant r\leqslant a_2,
\sqrt{(a_2-r)(r-a_1)}\leqslant x\leqslant\sqrt{(a_3-r)(r-a_1)}},\\
R^{(1)}_2&=&\Set{(r,x): a_1\leqslant r\leqslant a_2,
\sqrt{(a_3-r)(r-a_1)}\leqslant x\leqslant\sqrt{(a_4-r)(r-a_1)}},
\end{eqnarray*}
and
\begin{eqnarray*}
R^{(2)}_1&=& \Set{(r,x): a_2\leqslant r\leqslant
\frac{a_2a_4-a_1a_3}{a_2+a_4-a_1-a_3},
\sqrt{(a_3-r)(r-a_2)}\leqslant x\leqslant\sqrt{(a_4-r)(r-a_2)}},\\
R^{(2)}_2&=& \Set{(r,x): a_2\leqslant r\leqslant
\frac{a_2a_4-a_1a_3}{a_2+a_4-a_1-a_3},
\sqrt{(a_4-r)(r-a_2)}\leqslant x\leqslant\sqrt{(a_3-r)(r-a_1)}},\\
R^{(2)}_3&=& \Set{(r,x): a_2\leqslant r\leqslant
\frac{a_2a_4-a_1a_3}{a_2+a_4-a_1-a_3},
\sqrt{(a_3-r)(r-a_1)}\leqslant x\leqslant\sqrt{(a_4-r)(r-a_1)}},\\
R^{(2)}_4&=& \Set{(r,x):
\frac{a_2a_4-a_1a_3}{a_2+a_4-a_1-a_3}\leqslant r\leqslant a_3,
\sqrt{(a_3-r)(r-a_2)}\leqslant x\leqslant\sqrt{(a_3-r)(r-a_1)}},\\
R^{(2)}_5&=& \Set{(r,x):
\frac{a_2a_4-a_1a_3}{a_2+a_4-a_1-a_3}\leqslant r\leqslant a_3,
\sqrt{(a_3-r)(r-a_1)}\leqslant x\leqslant\sqrt{(a_4-r)(r-a_2)}},\\
R^{(2)}_6&=& \Set{(r,x):
\frac{a_2a_4-a_1a_3}{a_2+a_4-a_1-a_3}\leqslant r\leqslant a_3,
\sqrt{(a_4-r)(r-a_2)}\leqslant x\leqslant\sqrt{(a_4-r)(r-a_1)}},
\end{eqnarray*}
and
\begin{eqnarray*}
R^{(3)}_1&=&\Set{(r,x): a_3\leqslant r\leqslant a_4,
\sqrt{(a_4-r)(r-a_3)}\leqslant x\leqslant\sqrt{(a_4-r)(r-a_2)}},\\
R^{(3)}_2&=&\Set{(r,x): a_3\leqslant r\leqslant a_4,
\sqrt{(a_4-r)(r-a_2)}\leqslant x\leqslant\sqrt{(a_4-r)(r-a_1)}},
\end{eqnarray*}
and $f^{(4)}_{\langle\bsA\rangle,\Delta\bsA}(r,x)=0$ otherwise.
Moreover $f^{(4)}_{\Delta\bsA}(x)$ can be identified as
\begin{eqnarray*}
f^{(4)}_{\Delta\bsA}(x)&=&\frac{4^2}{V_4(\bsa)}x\Big[(a_4-a_3)\chi_{\Br{0,\frac{a_2-a_1}2}}(x)\varepsilon^3_{12}(x)+(a_4-a_1)\chi_{\Br{0,\frac{a_3-a_2}2}}(x)\varepsilon^3_{23}(x)\notag\\
&&~~~~~~~~~~~~~-(a_4-a_2)\chi_{\Br{0,\frac{a_3-a_1}2}}(x)\varepsilon^3_{13}(x)+(a_2-a_1)\chi_{\Br{0,\frac{a_4-a_3}2}}(x)\varepsilon^3_{34}(x)\notag\\
&&~~~~~~~~~~~~~-(a_3-a_1)\chi_{\Br{0,\frac{a_4-a_2}2}}(x)\varepsilon^3_{24}(x)+(a_3-a_2)\chi_{\Br{0,\frac{a_4-a_1}2}}(x)\varepsilon^3_{14}(x)\Big].
\end{eqnarray*}
Here the meanings of the notations $\chi$ and $\varepsilon_{ij}$ can
be found in Theorem~\ref{th:qu3it}.
\end{cor}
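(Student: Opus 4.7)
The first two displays follow by a direct change of variables applied to the joint PDF in the preceding theorem. Since $(\Delta_\psi\bsA)^2=\langle\bsA^2\rangle_\psi-\langle\bsA\rangle_\psi^2$, I would set $x=\Delta_\psi\bsA\geqslant 0$ and $r=\langle\bsA\rangle_\psi$, so that $s=x^2+r^2$ with Jacobian $\partial s/\partial x=2x$. Substituting into $f^{(4)}_{\langle\bsA\rangle,\langle\bsA^2\rangle}$ gives
\begin{eqnarray*}
f^{(4)}_{\langle\bsA\rangle,\Delta\bsA}(r,x)=2x\,f^{(4)}_{\langle\bsA\rangle,\langle\bsA^2\rangle}(r,x^2+r^2)=\frac{2\Gamma(4)}{V_4(\bsa)}\,x\,g(r,x^2+r^2).
\end{eqnarray*}
The regions $R^{(m)}_k$ arise from $D^{(m)}_k$ under the identity $\varphi_{i,j}(r)-r^2=(a_j-r)(r-a_i)$, which converts each inequality $\varphi_{i,j}(r)\leqslant s$ (resp.\ $s\leqslant\varphi_{i,j}(r)$) into $x\geqslant\sqrt{(a_j-r)(r-a_i)}$ (resp.\ the reverse). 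This translation is routine piece-by-piece.

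For the marginal $f^{(4)}_{\Delta\bsA}(x)$ the plan is to integrate out $r$. The key algebraic identity
\begin{eqnarray*}
x^2+r^2-\varphi_{i,j}(r)=\Pa{r-\tfrac{a_i+a_j}{2}}^2-\varepsilon_{ij}^2(x)
\end{eqnarray*}
says that each piece of $g(r,x^2+r^2)$ is a \emph{shifted parabola} in $r$ whose two roots are exactly $r=\tfrac{a_i+a_j}{2}\pm\varepsilon_{ij}(x)$. Moreover, the boundary curves defining the regions $R^{(m)}_k$, namely $x=\sqrt{(a_j-r)(r-a_i)}$, can be inverted as $r=\tfrac{a_i+a_j}{2}\pm\varepsilon_{ij}(x)$. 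Hence, for fixed $x$, the $r$-endpoints of every $R^{(m)}_k$ lie in the finite list $\{\tfrac{a_i+a_j}{2}\pm\varepsilon_{ij}(x):1\leqslant i<j\leqslant 4\}$, and the $j$-th endpoint is active exactly when $x\leqslant\tfrac{a_j-a_i}{2}$, i.e.\ when $\chi_{[0,(a_j-a_i)/2]}(x)=1$.

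The remaining work is the basic antiderivative
\begin{eqnarray*}
\int_{m-\varepsilon}^{m+\varepsilon}\Br{(r-m)^2-\varepsilon^2}\dif r=-\tfrac{4}{3}\varepsilon^3,
\end{eqnarray*}
applied to each pair $(i,j)$. Regrouping the ten sub-regions $R^{(m)}_k$ according to which index pair $(i,j)$ labels the quadratic $\varphi_{i,j}$ inside $g$, each regrouped contribution becomes an integral of $(r-\tfrac{a_i+a_j}{2})^2-\varepsilon_{ij}^2(x)$ against the constant coefficient $\pm(a_k-a_l)$ dictated by the piecewise definition of $g$, taken over the full interval $[\tfrac{a_i+a_j}{2}-\varepsilon_{ij}(x),\tfrac{a_i+a_j}{2}+\varepsilon_{ij}(x)]$ (or its intersection with the admissible range, where the indicator $\chi_{[0,(a_j-a_i)/2]}$ enforces non-negativity of the radicand). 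This produces precisely the six $\varepsilon_{ij}^3(x)$ contributions listed in the statement, with the asserted signs and prefactors.

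The main obstacle is combinatorial rather than analytic: one must verify that the ten sub-regions indeed reassemble, per pair $(i,j)$, into the full parabolic arc $[\tfrac{a_i+a_j}{2}-\varepsilon_{ij}(x),\tfrac{a_i+a_j}{2}+\varepsilon_{ij}(x)]$ (after cancellations from overlaps between pieces with differing labels), and that the coefficient attached to each $\varepsilon_{ij}^3$ agrees with the signed sum of the four $\pm(a_k-a_l)$ prefactors of $g$ that touch the $(i,j)$-arc. A bookkeeping table splitting the $x$-axis at the six thresholds $\tfrac{a_j-a_i}{2}$ and comparing with the piecewise description of $R^{(m)}_k$ settles this case by case; the identities $V_4(\bsa)=\prod_{i<j}(a_j-a_i)$ and $\Gamma(4)/V_4(\bsa)\cdot 2\cdot 2=4^2/V_4(\bsa)$ account for the overall constant $4^2/V_4(\bsa)$. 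A normalization check, $\int_0^{(a_4-a_1)/2} f^{(4)}_{\Delta\bsA}(x)\dif x=1$, serves as a final consistency verification.
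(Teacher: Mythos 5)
Your proposal is correct and follows essentially the same route as the paper, which itself omits all details: the change of variables $s=r^2+x^2$ with Jacobian $2x$, the translation $x^2+r^2-\varphi_{i,j}(r)=\bigl(r-\tfrac{a_i+a_j}{2}\bigr)^2-\varepsilon_{ij}^2(x)$ converting the lines $\varphi_{i,j}$ into the arcs $x=\sqrt{(a_j-r)(r-a_i)}$, and the elementary integral $\int_{m-\varepsilon}^{m+\varepsilon}\bigl[(r-m)^2-\varepsilon^2\bigr]\dif r=-\tfrac43\varepsilon^3$ are exactly the intended derivation, with the combinatorial bookkeeping delegated by the authors to \textit{Mathematica}. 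Two small points: the overall constant should be tracked as $2\Gamma(4)\cdot\tfrac43=16=4^2$ (your parenthetical ``$\Gamma(4)\cdot2\cdot2$'' equals $24$), and the per-pair regrouping is cleanest if one first rewrites $g(r,s)$ on its support as a signed combination $\sum_{i<j}d_{ij}(s-\varphi_{i,j}(r))H(\varphi_{i,j}(r)-s)$ (each displayed piece of $g$ being such a combination on its sub-region), after which every pair $(i,j)$ contributes an integral over the full interval $\bigl[\tfrac{a_i+a_j}{2}-\varepsilon_{ij}(x),\tfrac{a_i+a_j}{2}+\varepsilon_{ij}(x)\bigr]$ rather than over the raw sub-regions.
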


\begin{remark}
Denote by $R_1=\bigcup^2_{i=1}R^{(1)}_i,
R_2=\bigcup^6_{i=1}R^{(2)}_j$, and $R_3=\bigcup^2_{k=1}R^{(3)}_k$,
respectively. Thus
\begin{eqnarray*}
R_1 &=& \Set{(r,x): a_1\leqslant r\leqslant a_2,
\sqrt{(a_2-r)(r-a_1)}\leqslant x\leqslant\sqrt{(a_4-r)(r-a_1)}},\\
R_2 &=& \Set{(r,x): a_2\leqslant r\leqslant a_3,
\sqrt{(a_3-r)(r-a_2)}\leqslant x\leqslant\sqrt{(a_4-r)(r-a_1)}},\\
R_3 &=& \Set{(r,x): a_3\leqslant r\leqslant a_4,
\sqrt{(a_4-r)(r-a_3)}\leqslant x\leqslant\sqrt{(a_4-r)(r-a_1)}}.
\end{eqnarray*}
This implies that the support of
$f^{(4)}_{\langle\bsA\rangle,\Delta\bsA}$ is just $R_1\cup R_2\cup
R_3$, i.e., $\supp(f^{(4)}_{\langle\bsA\rangle,\Delta\bsA})=R_1\cup
R_2\cup R_3$.

We draw the plot of the support of
$f^{(d)}_{\langle\bsA\rangle,\Delta\bsA}(r,x)$, where $d=3,4$, as
below:
\begin{figure}[ht]
\subfigure[The support of
$f^{(3)}_{\langle\bsA\rangle,\Delta\bsA}(r,x)$, where
$\lambda(\bsA)=(1,3,9)$] {\begin{minipage}[b]{.49\linewidth}
\includegraphics[width=1\textwidth]{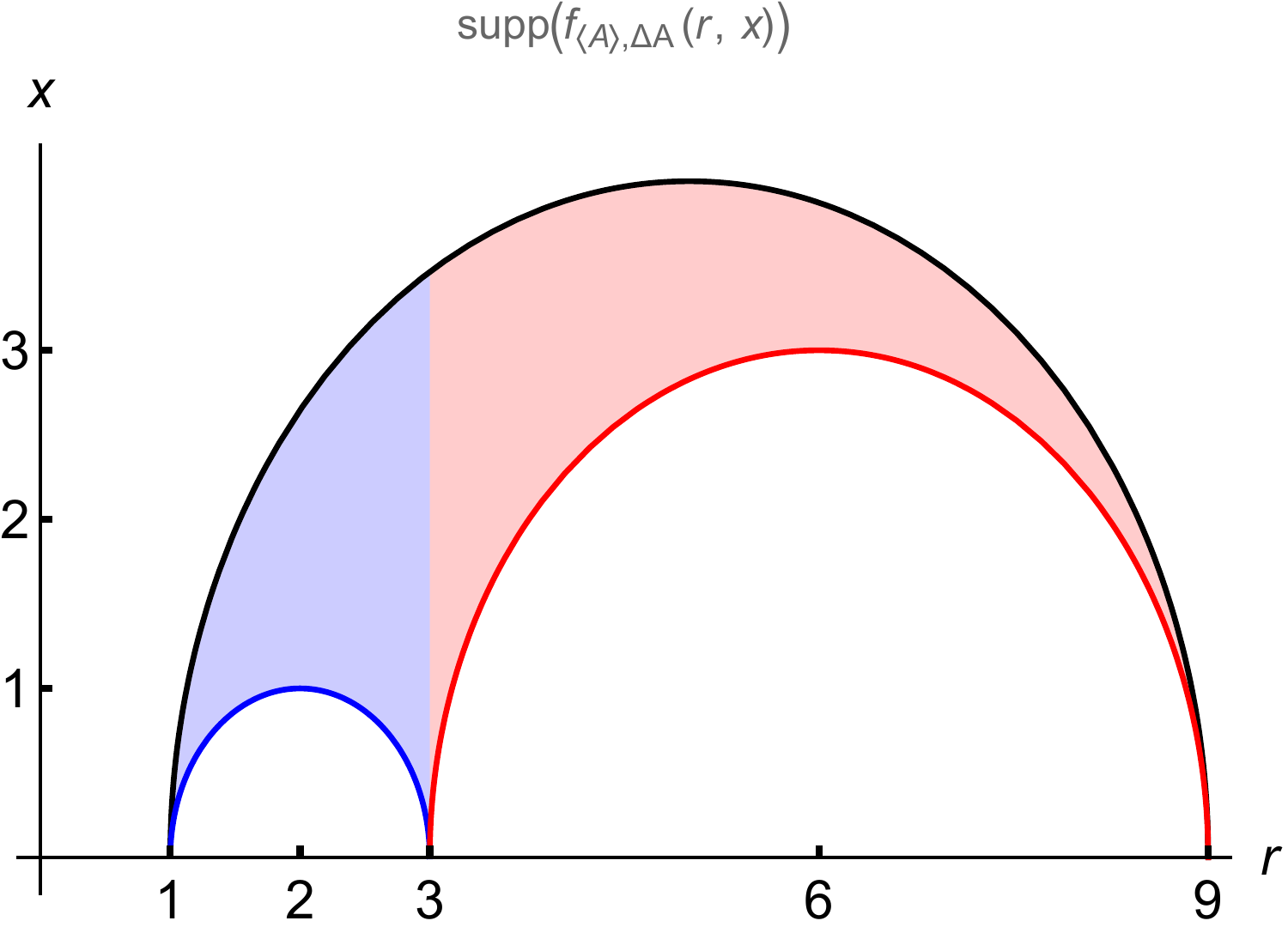}
\end{minipage}}
\subfigure[The support of
$f^{(4)}_{\langle\bsA\rangle,\Delta\bsA}(r,x)$, where
$\lambda(\bsA)=(1,3,9,27)$] {\begin{minipage}[b]{.49\linewidth}
\includegraphics[width=1\textwidth]{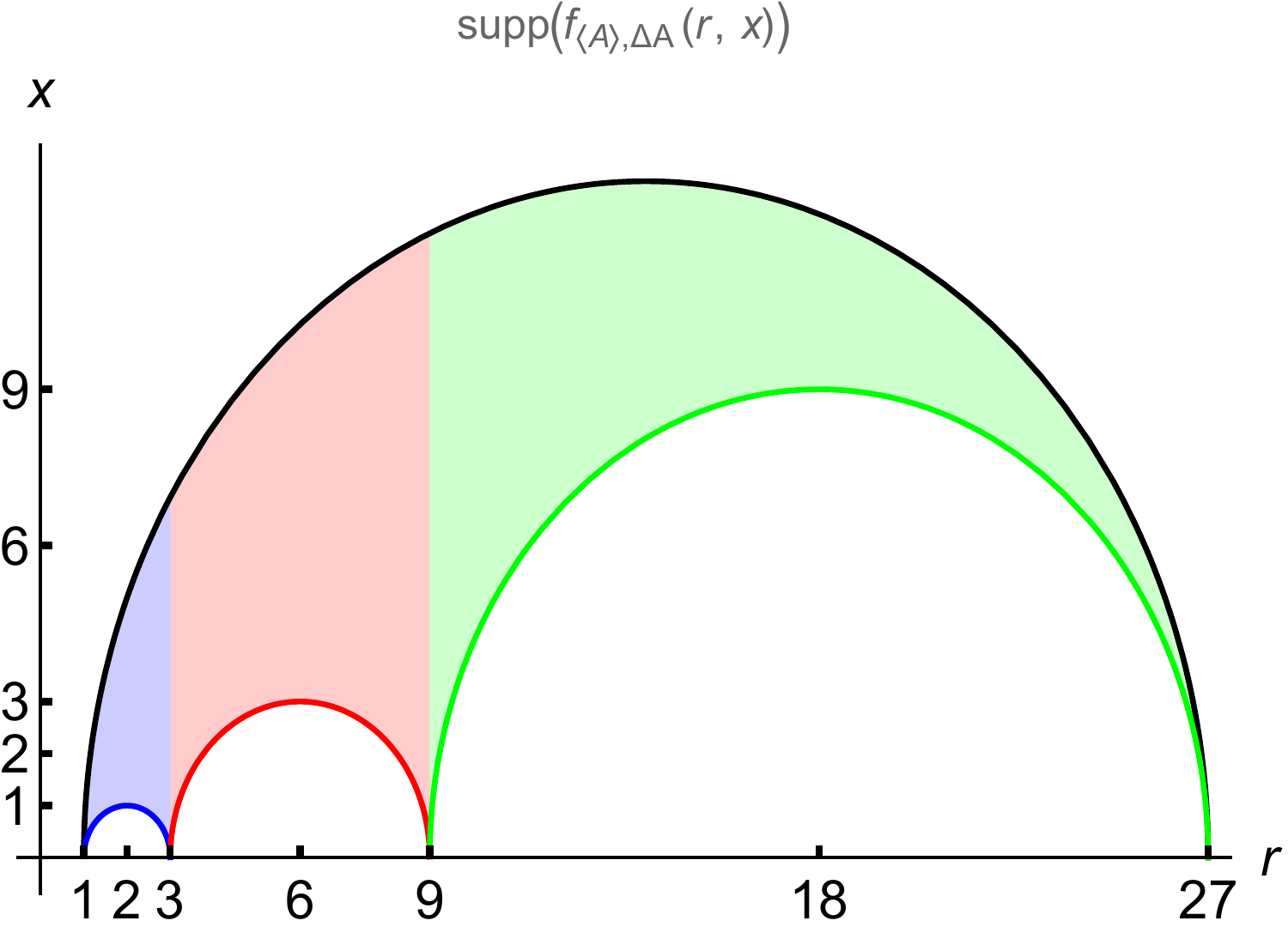}
\end{minipage}}
\caption{Plots of the supports of
$f^{(d)}_{\langle\bsA\rangle,\Delta\bsA}(r,x)$ for qudit observables
$\bsA$.}\label{fig:supp3&supp4}
\end{figure}
\end{remark}

Based on the above Corollary~\ref{cor:DeltaA4}, we can derive
$f^{(4)}_{\Delta\bsA}(x)$ just like to do similarly for
$f^{(3)}_{\Delta\bsA}(x)$. As an illustration, we will present a
specific example where the eigenvalues of $\bsA$ are given by
$\lambda(\bsA)=(1,3,9,27)$. In fact, this approach can goes for any
qudit observable with much computational complexity. In addition,
deriving the joint PDF of two uncertainties
$(\Delta\bsA,\Delta\bsB)$ of two qudit observables $\bsA$ and
$\bsB$, $f^{(d)}_{\Delta\bsA,\Delta\bsB}$, is very complicated. This
is not the goal of the present paper.

\begin{exam}
For a qudit observable $\bsA$, acting on $\complex^4$, with the
eigenvalues $\lambda(\bsA)=(1,3,9,27)$. Still employing the notation
in \eqref{eq:notationdelta} here:
\begin{eqnarray*}
\varepsilon_{12}(x)=\sqrt{1-x^2},\quad
\varepsilon_{13}(x)=\sqrt{4^2-x^2},\quad
\varepsilon_{14}(x)=\sqrt{13^2-x^2},\\
\varepsilon_{23}(x)=\sqrt{3^2-x^2},\quad
\varepsilon_{24}(x)=\sqrt{12^2-x^2},\quad
\varepsilon_{34}(x)=\sqrt{9^2-x^2}.
\end{eqnarray*}
Then from Corollary~\ref{cor:DeltaA4}, using marginal integral, we
can derive the PDF of $\Delta_\psi\bsA$ that \\
(i) If $x\in[0,1]$, then
\begin{eqnarray*}
f^{(4)}_{\Delta\bsA}(x) =\frac{x}{33696}\Pa{
9\varepsilon^3_{12}(x)+13\varepsilon^3_{23}(x)-12\varepsilon^3_{13}(x)+\varepsilon^3_{34}(x)-4\varepsilon^3_{24}(x)+3\varepsilon^3_{14}(x)}.
\end{eqnarray*}
(ii) If $x\in[1,3]$, then
\begin{eqnarray*}
f^{(4)}_{\Delta\bsA}(x)
=\frac{x}{33696}\Pa{13\varepsilon^3_{23}(x)-12\varepsilon^3_{13}(x)+\varepsilon^3_{34}(x)-4\varepsilon^3_{24}(x)+3\varepsilon^3_{14}(x)}.
\end{eqnarray*}
(iii) If $x\in[3,4]$, then
\begin{eqnarray*}
f^{(4)}_{\Delta\bsA}(x)
=\frac{x}{33696}\Pa{-12\varepsilon^3_{13}(x)+\varepsilon^3_{34}(x)-4\varepsilon^3_{24}(x)+3\varepsilon^3_{14}(x)}.
\end{eqnarray*}
(iv) If $x\in[4,9]$, then
\begin{eqnarray*}
f^{(4)}_{\Delta\bsA}(x)
=\frac{x}{33696}\Pa{\varepsilon^3_{34}(x)-4\varepsilon^3_{24}(x)+3\varepsilon^3_{14}(x)}.
\end{eqnarray*}
(v) If $x\in[9,12]$, then
\begin{eqnarray*}
f^{(4)}_{\Delta\bsA}(x)
=\frac{x}{33696}\Pa{-4\varepsilon^3_{24}(x)+3\varepsilon^3_{14}(x)}.
\end{eqnarray*}
(vi) If $x\in[12,13]$, then
\begin{eqnarray*}
f^{(4)}_{\Delta\bsA}(x) =\frac{x}{11232}\varepsilon^3_{14}(x).
\end{eqnarray*}
For illustrations, we draw the plot of $f^{(d)}_{\Delta\bsA}(x)$ for
qudit observables $\bsA$, where $d=3,4$, as below:
\begin{figure}[ht]
\subfigure[The PDF $f^{(3)}_{\Delta\bsA}(x)$, where
$\lambda(\bsA)=(1,3,9)$] {\begin{minipage}[b]{.49\linewidth}
\includegraphics[width=1\textwidth]{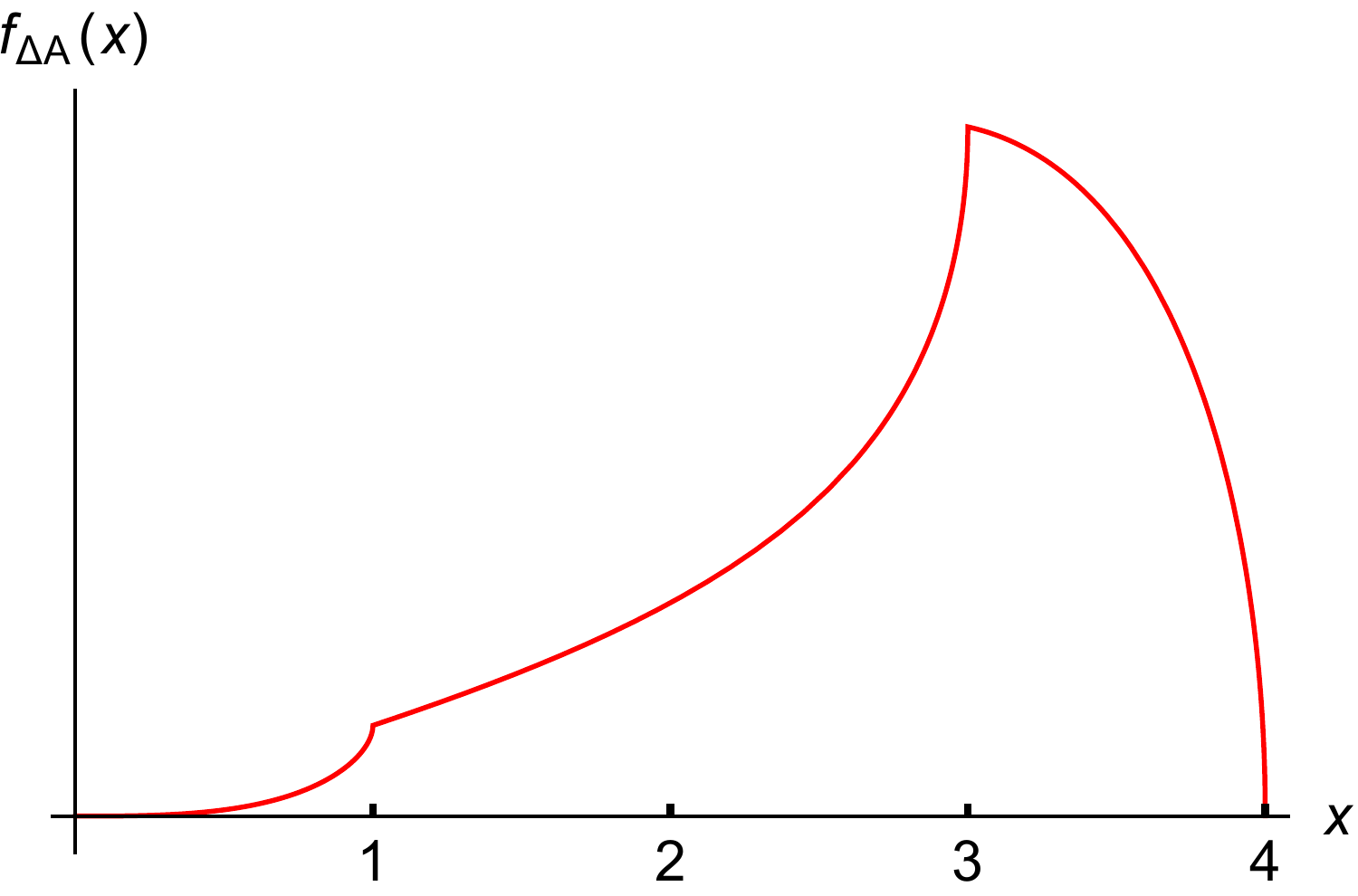}
\end{minipage}}
\subfigure[The PDF $f^{(4)}_{\Delta\bsA}(x)$, where
$\lambda(\bsA)=(1,3,9,27)$] {\begin{minipage}[b]{.49\linewidth}
\includegraphics[width=1\textwidth]{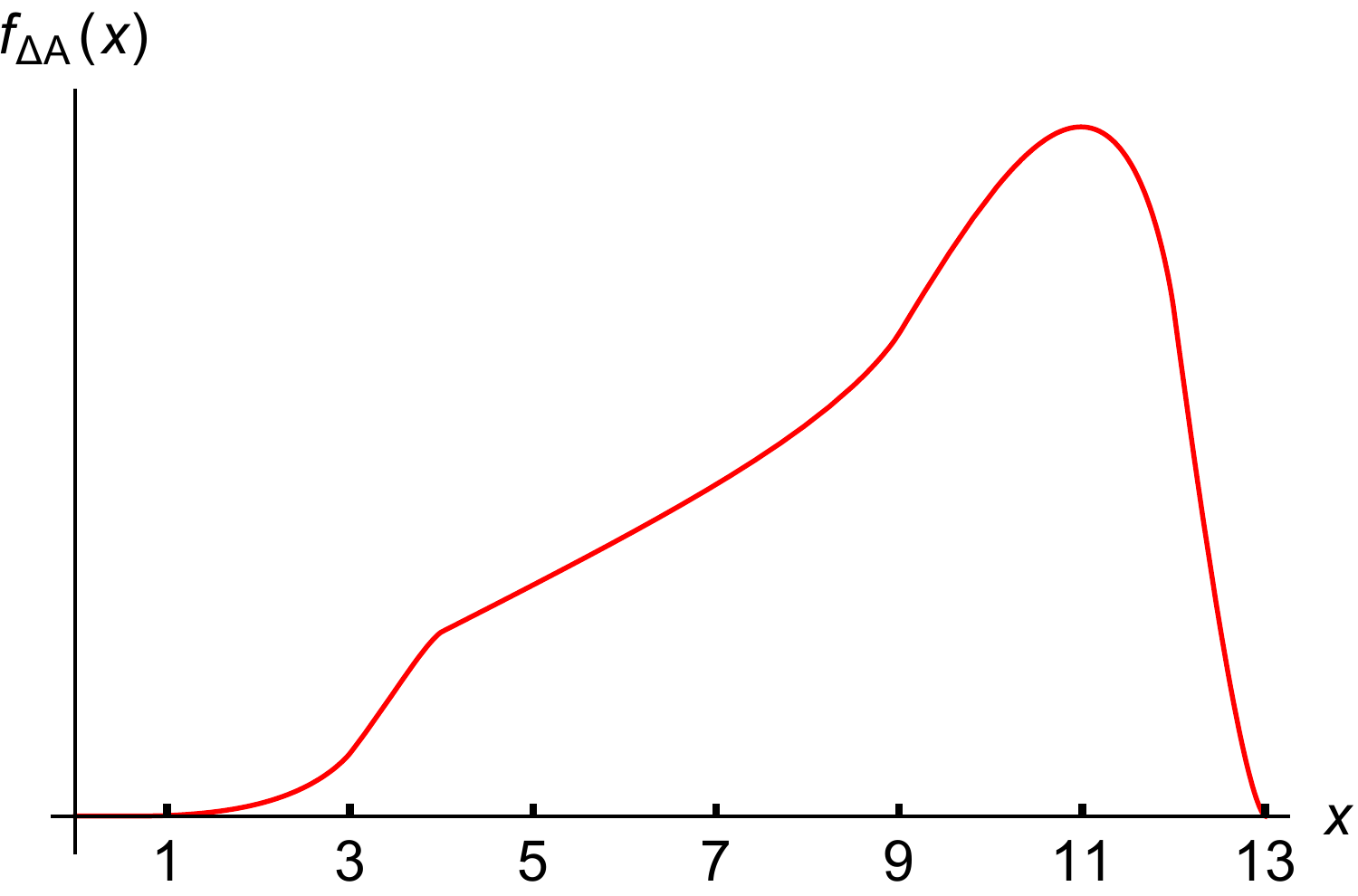}
\end{minipage}}
\caption{Plots of the PDFs $f^{(d)}_{\Delta\bsA}(x)$ for qudit
observables $\bsA$.}\label{fig:varpdf3&varpdf4}
\end{figure}
\end{exam}
In the last, we will identify the supports of
$f^{(d)}_{\langle\bsA\rangle,\langle\bsA^2\rangle}(r,s)$ and
$f^{(d)}_{\langle\bsA\rangle,\Delta\bsA}(r,x)$, where
\begin{eqnarray*}
f^{(d)}_{\langle\bsA\rangle,\langle\bsA^2\rangle}(r,s) &=&
\int\delta(r-\langle\bsA\rangle_\psi)\delta(s-\langle\bsA^2\rangle_\psi)\dif\mu(\psi),\\
f^{(d)}_{\langle\bsA\rangle,\Delta\bsA}(r,x) &=& 2x
f^{(d)}_{\langle\bsA\rangle,\langle\bsA^2\rangle}\Pa{r,r^2+x^2}.
\end{eqnarray*}

\begin{thrm}
For a qudit observable $\bsA$, acting on $\complex^d(d>1)$, with
eigenvalues $\lambda(\bsA)=(a_1,\ldots,a_d)$, where
$a_1<\cdots<a_d$, the supports of the PDFs of
$f^{(d)}_{\langle\bsA\rangle,\langle\bsA^2\rangle}(r,s)$ and
$f^{(d)}_{\langle\bsA\rangle,\Delta\bsA}(r,x)$, respectively, given
by the following:
\begin{eqnarray*}
\supp\Pa{f^{(d)}_{\langle\bsA\rangle,\langle\bsA^2\rangle}} =
\bigcup^{d-1}_{k=1}F_{k,k+1},
\end{eqnarray*}
where
\begin{eqnarray*}
F_{k,k+1} := \Set{(r,s): a_k\leqslant r\leqslant
a_{k+1},\varphi_{k,k+1}(r)\leqslant s\leqslant \varphi_{1,d}(r)};
\end{eqnarray*}
\begin{eqnarray*}
\supp\Pa{f^{(d)}_{\langle\bsA\rangle,\Delta\bsA}} =
\bigcup^{d-1}_{k=1} V_{k,k+1},
\end{eqnarray*}
where
\begin{eqnarray*}
V_{k,k+1} := \Set{(r,x): a_k\leqslant r\leqslant
a_{k+1},\sqrt{(a_{k+1}-r)(r-a_k)}\leqslant x\leqslant
\sqrt{(a_d-r)(r-a_1)}}.
\end{eqnarray*}
\end{thrm}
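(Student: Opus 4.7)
The core observation is that the support of the joint PDF depends only on the \emph{image} of the parameter space, not on the density itself, so I will bypass the explicit PDF and work directly with the achievable pairs $(r,s)=(\langle\bsA\rangle_\psi,\langle\bsA^2\rangle_\psi)$ as $\ket{\psi}$ ranges over unit vectors in $\complex^d$. First I would expand $\ket{\psi}=\sum_{i=1}^d c_i\ket{a_i}$ in the eigenbasis of $\bsA$; writing $p_i:=|c_i|^2$, one obtains $\langle\bsA\rangle_\psi=\sum_{i}a_ip_i$ and $\langle\bsA^2\rangle_\psi=\sum_i a_i^2 p_i$, so the support is exactly
\begin{eqnarray*}
\supp\Pa{f^{(d)}_{\langle\bsA\rangle,\langle\bsA^2\rangle}}
=\Set{\Pa{\sum_{i=1}^d a_ip_i,\sum_{i=1}^d a_i^2p_i}:\bsp\in\Delta_{d-1}}.
\end{eqnarray*}
This is legitimate because the Haar measure on the sphere induces a full-support distribution (Dirichlet$(1,\ldots,1)$) on $\bsp\in\Delta_{d-1}$, so no point of the image is missed.

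Next I would recognize this set as the image of the simplex under a linear map, hence as the convex hull of the images of the vertices, namely the points $\set{(a_i,a_i^2)}_{i=1}^d$ sitting on the parabola $s=r^2$. Because the parabola is strictly convex and $a_1<\cdots<a_d$, the convex hull is bounded above by the single chord through $(a_1,a_1^2)$ and $(a_d,a_d^2)$, which is the line $s=\varphi_{1,d}(r)$, and bounded below by the piecewise-linear curve whose segment on $[a_k,a_{k+1}]$ is the chord $s=\varphi_{k,k+1}(r)$. Slicing by $r$ on each interval $[a_k,a_{k+1}]$ gives precisely $F_{k,k+1}$, so $\supp\Pa{f^{(d)}_{\langle\bsA\rangle,\langle\bsA^2\rangle}}=\bigcup_{k=1}^{d-1}F_{k,k+1}$.

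For the second support, I would exploit the relation
$f^{(d)}_{\langle\bsA\rangle,\Delta\bsA}(r,x)=2x\,f^{(d)}_{\langle\bsA\rangle,\langle\bsA^2\rangle}(r,r^2+x^2)$
stated right before the theorem. Since the prefactor $2x$ vanishes only at $x=0$ (which lies on the lower boundary of each $V_{k,k+1}$ only at the endpoints $r=a_k,a_{k+1}$), the support is the preimage of the first support under $s=r^2+x^2$ intersected with $x\geqslant 0$. The condition $s\geqslant\varphi_{k,k+1}(r)$ becomes $x^2\geqslant(a_{k+1}-r)(r-a_k)$ and $s\leqslant\varphi_{1,d}(r)$ becomes $x^2\leqslant(a_d-r)(r-a_1)$ on $r\in[a_k,a_{k+1}]$. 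Taking positive square roots gives exactly $V_{k,k+1}$, yielding $\supp\Pa{f^{(d)}_{\langle\bsA\rangle,\Delta\bsA}}=\bigcup_{k=1}^{d-1}V_{k,k+1}$.

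The main subtlety will be justifying the equality (not merely inclusion) between the PDF support and the parameter image set; this is where I would explicitly invoke that the Haar measure pushes forward to a measure with full support on $\Delta_{d-1}$, so that every interior point of the convex hull lies in the support of the joint distribution, and boundary points are reached as limits. The geometric identification of the convex hull of points on a parabola via upper/lower chords is routine, and the algebraic translation between $(r,s)$ and $(r,x)$ via $s=r^2+x^2$ is a direct substitution.
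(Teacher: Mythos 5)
Your proposal is correct, and it rests on the same parametrization the paper uses: pass to the probability vector $p_i=|\langle a_i|\psi\rangle|^2$ on the simplex $\Delta_{d-1}$, so that $(r,s)=(\sum_i a_ip_i,\sum_i a_i^2p_i)$, and then identify the achievable region. Where you diverge is in how that region is identified. The paper verifies the two bounding inequalities directly, computing $s-\varphi_{k,k+1}(r)=\sum_i(a_i-a_k)(a_i-a_{k+1})p_i\geqslant 0$ and similarly $s-\varphi_{1,d}(r)\leqslant 0$, and then exhibits the states $(p_k,p_{k+1})=(t,1-t)$ that saturate the lower bound; this cleanly gives the inclusion of the image in $\bigcup_k F_{k,k+1}$ but only gestures at the reverse inclusion (that \emph{every} point of each $F_{k,k+1}$ is attained). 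Your route --- the image of the simplex under an affine map is the convex hull of the vertex images $(a_i,a_i^2)$, which lie in convex position on the parabola $s=r^2$, so the hull is bounded above by the chord $\varphi_{1,d}$ and below by the consecutive chords $\varphi_{k,k+1}$ --- delivers both inclusions at once, and you also make explicit the measure-theoretic point (full support of the Dirichlet$(1,\ldots,1)$ pushforward) that equates the support of the PDF with this image, which the paper leaves implicit. The translation to $V_{k,k+1}$ via $s=r^2+x^2$ is the same in both. In short: same skeleton, but your convex-hull argument is the more complete and self-contained way to establish the equality of sets, at the cost of invoking (routine) facts about affine images of simplices and points on a convex curve rather than a two-line algebraic identity.
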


\begin{proof}
Without loss of generality, we assume that
$\bsA=\diag(a_1,\ldots,a_d)$ with $a_1<\cdots<a_d$. Let
$\ket{\psi}=(\psi_1,\ldots,\psi_d)^\t\in\complex^d$ be a pure state
and $r_k=\abs{\psi_k}^2$. Thus $(r_1,\ldots,r_d)$ is a
$d$-diemsional probability vector. Then
\begin{eqnarray*}
r=\langle\bsA\rangle = \sum^d_{k=1}a_kr_k\quad\text{and}\quad
s=\langle\bsA^2\rangle = \sum^d_{k=1}a^2_kr_k.
\end{eqnarray*}
Thus, for each $k\in\set{1,\ldots,d-1}$,
\begin{eqnarray*}
&&s-\varphi_{k,k+1}(r) = \sum^d_{i=1}a^2_ir_i -
(a_k+a_{k+1})\sum^d_{i=1}a_ir_i+a_ka_{k+1}\\
&&= \sum^d_{i=1}a^2_ir_i -
(a_k+a_{k+1})\sum^d_{i=1}a_ir_i+a_ka_{k+1}\sum^d_{i=1}r_i\\
&&=\sum^d_{i=1}(a_i-a_k)(a_i-a_{k+1})r_i.
\end{eqnarray*}
Note that $a_1<\cdots<a_d$ and $r_i\geqslant0$ for each
$i=1,\ldots,d$. We see that, when $i=k,k+1$,
$(a_i-a_k)(a_i-a_{k+1})r_i=0$, and
$(a_i-a_k)(a_i-a_{k+1})r_i\geqslant0$ otherwise. This means that
$s\geqslant \varphi_{k,k+1}(r)$; and this inequality is saturated if
$(r_k,r_{k+1})=(t,1-t)$ for $t\in(0,1)$ and $r_i=0$ for $i\neq
k,k+1$. Similarly, we can easily get that $s\leqslant
\varphi_{1,d}(r)$. Denote
\begin{eqnarray*} F_{k,k+1} = \Set{(r,s):
a_k\leqslant r\leqslant a_{k+1},\varphi_{k,k+1}(r)\leqslant
s\leqslant \varphi_{1,d}(r)};
\end{eqnarray*}
Hence
\begin{eqnarray*}
\supp\Pa{f^{(d)}_{\langle\bsA\rangle,\langle\bsA^2\rangle}} =
\bigcup^{d-1}_{k=1}F_{k,k+1}.
\end{eqnarray*}
Now, for $x=\Delta_\psi\bsA$, we see that $s=r^2+x^2$. By employing
the support of $f^{(d)}_{\langle\bsA\rangle,\langle\bsA^2\rangle}$,
we can derive the support of
$f^{(d)}_{\langle\bsA\rangle,\Delta\bsA}(r,x)$ as follows: Denote
\begin{eqnarray*}
V_{k,k+1} = \Set{(r,x): a_k\leqslant r\leqslant
a_{k+1},\sqrt{(a_{k+1}-r)(r-a_k)}\leqslant x\leqslant
\sqrt{(a_d-r)(r-a_1)}},
\end{eqnarray*}
then
\begin{eqnarray*}
\varphi_{k,k+1}(r)\leqslant s=r^2+x^2\leqslant
\varphi_{1,d}(r)\Longleftrightarrow (r,x)\in V_{k,k+1}.
\end{eqnarray*}
Therefore the support of
$f^{(d)}_{\langle\bsA\rangle,\Delta\bsA}(r,x)$ is given by
\begin{eqnarray*}
\supp\Pa{f^{(d)}_{\langle\bsA\rangle,\Delta\bsA}} =
\bigcup^{d-1}_{k=1} V_{k,k+1}.
\end{eqnarray*}
This completes the proof.
\end{proof}

For the joint PDF of uncertainties of multiple qudit observables
acting on $\mC^d(d\geqslant3)$, say, a pair of qudit observables
$(\bsA,\bsB)$, deriving the joint PDF
$f^{(d)}_{\Delta\bsA,\Delta\bsB}(x,y)$ is very complicated because
there is much difficulty in calculating the Laplace
transformation/inverse Laplace transformation of
$f^{(d)}_{\Delta\bsA,\Delta\bsB}(x,y)$. The reason is that we still
cannot figure out what the relationship among
$\lambda_k(\alpha\bsA+\beta\bsB), \lambda_k(\bsA)$, and
$\lambda_k(\bsB)$ is for varied $(\alpha,\beta)\in\mR^2$. A fresh
method to do this is expected to discover in the future.

\section{Discussion and concluding remarks}

Recall that the support $\supp(f)$ of a function
$f$ is given by the closure of the subset of preimage for
which $f$ does not vanish. From Theorem~\ref{th:vardis}, we see that the
support of $f^{(2)}_{\Delta\bsA}$ is the closed
interval $[0,\abs{\bsa}]$. This is in consistent with the fact that
$\Delta_\psi\bsA\in[0,v(\bsA)]$, where
$v(\bsA):=\tfrac12(\lambda_{\max}(\bsA)-\lambda_{\min}(\bsA))$ and
$\ket{\psi}$ is any pure state.

From Proposition~\ref{prop:expdis} and Theorem~\ref{th:varAvarB}, we
can infer that, for $d=2$, each element in
$\cU^{(\text{p})}_{\Delta\bsA,\Delta\bsB}$ is just the solution of
the following inequality:
\begin{eqnarray*}
\abs{\bsb}^2x^2+\abs{\bsa}^2y^2+2\abs{\Inner{\bsa}{\bsb}}\sqrt{(\abs{\bsa}^2-x^2)(\abs{\bsb}^2-y^2)}\geqslant
\abs{\bsa}^2\abs{\bsb}^2+\abs{\Inner{\bsa}{\bsb}}^2,
\end{eqnarray*}
which is exactly the one we obtained in \cite{Zhang2021preprint} for
mixed states. This indicates that, in the qubit situation, we have
that
\begin{prop}
For a pair of qubit observables $(\bsA,\bsB)$ acting on $\mC^2$, it
holds that
\begin{eqnarray*}
\cU^{(\mathrm{p})}_{\Delta\bsA,\Delta\bsB} =
\cU^{(\mathrm{m})}_{\Delta\bsA,\Delta\bsB}.
\end{eqnarray*}
\end{prop}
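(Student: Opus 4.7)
The inclusion $\cU^{(\mathrm{p})}_{\Delta\bsA,\Delta\bsB}\subseteq\cU^{(\mathrm{m})}_{\Delta\bsA,\Delta\bsB}$ is automatic since every pure state is a density matrix. For the reverse inclusion my plan is to identify $\cU^{(\mathrm{p})}_{\Delta\bsA,\Delta\bsB}$ with the closure of the support of $f^{(2)}_{\Delta\bsA,\Delta\bsB}$ and then to show that this support is exactly the semi-algebraic set cut out by the inequality displayed just above the proposition. Since the companion paper \cite{Zhang2021preprint} has already identified the same inequality with $\cU^{(\mathrm{m})}_{\Delta\bsA,\Delta\bsB}$ in the qubit case, matching the two descriptions will complete the argument.

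First I would handle the generic case in which $\set{\bsa,\bsb}$ is linearly independent. Combining Theorem~\ref{th:varAvarB} with Proposition~\ref{prop:expdis}(i), a pair $(x,y)\in[0,\abs{\bsa})\times[0,\abs{\bsb})$ lies in the support precisely when the Heaviside factor $H\Pa{1-\omega_{\bsA,\bsB}(r_+(x),s_j(y))}$ is nonzero for at least one sign $j\in\set{+,-}$; the two sign choices on the $r$-coordinate are redundant because $\omega_{\bsA,\bsB}^2$ is invariant under the simultaneous flip of both coordinate signs, as already noted in the proof of Theorem~\ref{th:varAvarB}. Next I would substitute $(r-a_0)^2=\abs{\bsa}^2-x^2$ and $(s-b_0)^2=\abs{\bsb}^2-y^2$ into the quadratic form $\omega_{\bsA,\bsB}^2(r,s)$ built from $\bsT^{-1}_{\bsa,\bsb}$; this produces a cross term proportional to $\pm\Inner{\bsa}{\bsb}\sqrt{(\abs{\bsa}^2-x^2)(\abs{\bsb}^2-y^2)}$ whose sign depends on $j$. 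Optimizing over $j$ replaces $\Inner{\bsa}{\bsb}$ by $\abs{\Inner{\bsa}{\bsb}}$, and rearranging the condition $\omega_{\bsA,\bsB}^2\leqslant 1$ yields precisely the inequality of the excerpt. The degenerate case $\bsb=\kappa\bsa$ is handled separately by inserting the delta-function form from Proposition~\ref{prop:expdis}(ii) into Theorem~\ref{th:varAvarB} and checking that the resulting one-dimensional support agrees with its mixed-state counterpart (for which both sides reduce to the image of the single-observable variance).

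The step I expect to be most delicate is the sign analysis: one must verify that the union over the four choices $(r_\pm(x),s_\pm(y))$ collapses, through the Heaviside indicator, to a single inequality involving the \emph{absolute value} $\abs{\Inner{\bsa}{\bsb}}$ rather than the signed $\Inner{\bsa}{\bsb}$. Once that point is secured, the remaining algebra is routine, and the equality $\cU^{(\mathrm{p})}_{\Delta\bsA,\Delta\bsB}=\cU^{(\mathrm{m})}_{\Delta\bsA,\Delta\bsB}$ follows from the matching characterization obtained in \cite{Zhang2021preprint}.
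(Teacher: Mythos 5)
Your proposal is correct and follows essentially the same route as the paper: both arguments identify $\cU^{(\mathrm{p})}_{\Delta\bsA,\Delta\bsB}$ with the support of $f^{(2)}_{\Delta\bsA,\Delta\bsB}$ obtained from Theorem~\ref{th:varAvarB} and Proposition~\ref{prop:expdis}, reduce the Heaviside condition to the displayed inequality in $(x,y)$, and then invoke the matching characterization of $\cU^{(\mathrm{m})}_{\Delta\bsA,\Delta\bsB}$ from \cite{Zhang2021preprint}. Your explicit treatment of the sign collapse to $\abs{\Inner{\bsa}{\bsb}}$ and of the degenerate case $\bsb=\kappa\bsa$ merely fills in details the paper leaves implicit.
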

One may wonder if this identity holds for general $d\geqslant2$,
as the variance of $\bsA$ with respect to a mixed state
can always be decomposed as a convex combination of some variances of
$\bsA$ associated to pure states, see Eq.~\eqref{eq:vardecom}.

For multiple qudit observables $\bsA_k$ $(k=1,\ldots,n)$ acting on
$\complex^d$, comparing the set
$\cU^{(\text{p})}_{\Delta\bsA_1,\ldots,\Delta\bsA_n}$ with the set $
\cU^{(\text{m})}_{\Delta\bsA_1,\ldots,\Delta\bsA_n}$ is an
interesting problem. Unfortunately, our Theorem~\ref{th:ABC2},
together the result obtained in \cite{Zhang2021preprint}, indicates
that
$\cU^{(\text{m})}_{\Delta\bsA_1,\Delta\bsA_2,\Delta\bsA_3}
=\cU^{(\text{p})}_{\Delta\bsA_1,\Delta\bsA_2,\Delta\bsA_3}$
does not hold in general. In fact,
$\partial\cU^{(\text{m})}_{\Delta\bsA_1,\Delta\bsA_2,\Delta\bsA_3}
=\cU^{(\text{p})}_{\Delta\bsA_1,\Delta\bsA_2,\Delta\bsA_3}$
in the qubit situations, that is, the boundary surface of
$\cU^{(\text{m})}_{\Delta\bsA_1,\Delta\bsA_2,\Delta\bsA_3}$ is just
$\cU^{(\text{p})}_{\Delta\bsA_1,\Delta\bsA_2,\Delta\bsA_3}$ in the
qubit situations. This also indicates that the
following inclusion is proper in general for multiple observables,
\begin{eqnarray*}
\cU^{(\text{p})}_{\Delta\bsA_1,\ldots,\Delta\bsA_n}\subsetneq\cU^{(\text{m})}_{\Delta\bsA_1,\ldots,\Delta\bsA_n}.
\end{eqnarray*}
Based on this, two extreme cases:
$\cU^{(\text{p})}_{\Delta\bsA_1,\ldots,\Delta\bsA_n}=\cU^{(\text{m})}_{\Delta\bsA_1,\ldots,\Delta\bsA_n}$
or
$\cU^{(\text{p})}_{\Delta\bsA_1,\ldots,\Delta\bsA_n}=\partial\cU^{(\text{m})}_{\Delta\bsA_1,\ldots,\Delta\bsA_n}$
should be characterized.

In addition, we also see that once we obtain the uncertainty
regions for observables $\bsA_k$, we can infer additive uncertainty
relations such as
\begin{eqnarray*}
\sum^n_{k=1}(\Delta_\rho\bsA_k)^2\geqslant
\min_{\rho\in\density{\complex^d}}\sum^n_{k=1}(\Delta_\rho\bsA_k)^2
=
\min\Set{\sum^n_{k=1}x^2_k:(x_1,\ldots,x_n)\in\cU^{(\text{m})}_{\Delta\bsA_1,\ldots,\Delta\bsA_n}},
\end{eqnarray*}
or
\begin{eqnarray*}
\sum^n_{k=1}\Delta_\rho\bsA_k\geqslant
\min_{\rho\in\density{\complex^d}}\sum^n_{k=1}\Delta_\rho\bsA_k =
\min\Set{\sum^n_{k=1}x_k:(x_1,\ldots,x_n)\in\cU^{(\text{m})}_{\Delta\bsA_1,\ldots,\Delta\bsA_n}}.
\end{eqnarray*}
Analogous optimal problems can also be considered for
$\cU^{(\text{p})}_{\Delta\bsA_1,\ldots,\Delta\bsA_n}$. These results
can be used to detect entanglement
\cite{Guhne2004prl,Schwonnek2017prl}. The current results and the
results in \cite{Zhang2021preprint} together give the complete
solutions to the uncertainty region and uncertainty relations for
qubit observables.

We hope the results obtained in the present paper can shed new
lights on the related problems in quantum information theory. Our
approach may be applied to the study on PDFs in higher dimensional
spaces. It would be also interesting to apply PDFs to measurement
and/or quantum channel uncertainty relations.

\subsubsection*{Acknowledgments}
This work is supported by the NSF of China under Grant Nos.
11971140, 12075159, and 12171044, Beijing Natural Science Foundation
(Z190005), the Academician Innovation Platform of Hainan Province,
and Academy for Multidisciplinary Studies, Capital Normal
University. LZ is also funded by Natural Science Foundations of
Hubei Province Grant No. 2020CFB538.


\end{document}